\newtcolorbox{myquote}[1][]{%
    colback=black!5,
    colframe=black!5,
    notitle,
    sharp corners,
    borderline west={2pt}{0pt}{blue!80!black},
    enhanced,
    breakable,
    }
\newcommand{\F}{\mathbb{F}}
\theoremstyle{plain}
\newtheorem{theorem}{Theorem}
\newtheorem*{claim}{Claim}
\theoremstyle{definition}
\newtheorem{definition}[theorem]{Definition}
\newtheoremstyle{goal}% name of the style to be used
  {\topsep}% measure of space to leave above the theorem. E.g.: 3pt
  {\topsep}% measure of space to leave below the theorem. E.g.: 3pt
  {\normalfont}% name of font to use in the body of the theorem
  {0pt}% measure of space to indent
  {\bfseries}% name of head font
  {: } %punctuation between head and body
  { }% space after theorem head; " " = normal interword space
  {\thmname{#1}\thmnumber{ #2}\thmnote{ (#3)}}
\theoremstyle{goal}
\newcounter{ExperimentCount}
\titleformat*{\subsection}{\fontsize{11}{12} \selectfont\bfseries}
\titleformat*{\subsubsection}{\bfseries}
\titlespacing*{\subsection}{0pt}{1.75ex plus 0.5ex minus 0.5ex}{0.5ex plus .25ex minus .25ex}
\g@addto@macro{\UrlBreaks}{\UrlOrds}
\DeclareMathAlphabet\mathcal{OMS}{cmsy}{m}{n}
\SetMathAlphabet\mathcal{bold}{OMS}{cmsy}{b}{n}
\DeclareSymbolFont{AMSb}{U}{msb}{m}{n}
\DeclareSymbolFontAlphabet{\mathbb}{AMSb}
\DeclareSymbolFont{numbers}{T1}{ptm}{m}{n}
\DeclareMathSymbol{0}\mathalpha{numbers}{"30}
\DeclareMathSymbol{1}\mathalpha{numbers}{"31}
\DeclareMathSymbol{2}\mathalpha{numbers}{"32}
\DeclareMathSymbol{3}\mathalpha{numbers}{"33}
\DeclareMathSymbol{4}\mathalpha{numbers}{"34}
\DeclareMathSymbol{5}\mathalpha{numbers}{"35}
\DeclareMathSymbol{6}\mathalpha{numbers}{"36}
\DeclareMathSymbol{7}\mathalpha{numbers}{"37}
\DeclareMathSymbol{8}\mathalpha{numbers}{"38}
\DeclareMathSymbol{9}\mathalpha{numbers}{"39}
\renewcommand{\operator@font}{\mathgroup\symnumbers}
\renewcommand\normalsize{%
   \@setfontsize\normalsize{10pt}{11.55pt}%
   \abovedisplayskip 11\p@ \@plus3\p@ \@minus6\p@
   \abovedisplayshortskip \z@ \@plus3\p@
   \belowdisplayshortskip 6.5\p@ \@plus3.5\p@ \@minus3\p@
   \belowdisplayskip \abovedisplayskip
   \let\@listi\@listI}
\renewcommand\small{%
  \@setfontsize\small{9pt}{10pt}%
   \abovedisplayskip 8.5\p@ \@plus3\p@ \@minus4\p@
   \abovedisplayshortskip \z@ \@plus2\p@
   \belowdisplayshortskip 4\p@ \@plus2\p@ \@minus2\p@
   \def\@listi{\leftmargin\leftmargini
               \topsep 4\p@ \@plus2\p@ \@minus2\p@
               \parsep 2\p@ \@plus\p@ \@minus\p@
               \itemsep \parsep}%
   \belowdisplayskip \abovedisplayskip
}
\definecolor{LightCyan}{rgb}{0.88,1,1}
\long\def\com#1{}
\renewcommand{\paragraph}[1]{\medskip\noindent\textbf{#1}}
\let\c@table\c@figure
\setlist[description]{leftmargin=\parindent,topsep=0ex,itemsep=0ex,partopsep=1ex,parsep=1ex}
\LetLtxMacro{\oldtextsc}{\textsc}
\renewcommand{\textsc}[1]{\oldtextsc{\scalefont{1.2}#1}}
\newcommand{\ignore}[1]{}
\newcommand{\name}{Express\xspace}
\newenvironment{denseitemize}{
\begin{itemize}[topsep=2pt, partopsep=0pt, leftmargin=1.5em]
  \setlength{\itemsep}{4pt}
  \setlength{\parskip}{0pt}
  \setlength{\parsep}{0pt}
}{\end{itemize}}
\newenvironment{denseenumerate}{
\begin{enumerate}[topsep=2pt, partopsep=0pt, leftmargin=1.5em]
  \setlength{\itemsep}{4pt}
  \setlength{\parskip}{0pt}
  \setlength{\parsep}{0pt}
}{\end{enumerate}}
\newcommand{\rev}[1]{#1}
\begin{document} 
\frenchspacing
%\clearpage

%\author{}
%\date{}

\date{}
\title{Express: Lowering the Cost of Metadata-hiding \\Communication with Cryptographic Privacy}
\author{
{\rm Saba Eskandarian}\\
Stanford University 
\and
{\rm Henry Corrigan-Gibbs}\\
MIT CSAIL 
\and 
{\rm Matei Zaharia}\\
Stanford University
\and 
{\rm Dan Boneh}\\
Stanford University
}

\maketitle

\begin{abstract}
Existing systems for metadata-hiding messaging that provide cryptographic privacy properties have either high communication costs, high computation costs, or both. In this paper, we introduce \name, a metadata-hiding communication system that significantly reduces both communication and computation costs. 
\name is a two-server system that provides cryptographic security against an arbitrary number of malicious clients and one malicious server. 
In terms of communication, \name only incurs a constant-factor overhead per message sent regardless of the number of users, whereas previous cryptographically-secure systems Pung and Riposte had communication costs proportional to roughly the square root of the number of users. 
In terms of computation, \name only uses symmetric key cryptographic primitives and makes both practical and asymptotic improvements on protocols employed by prior work. 
These improvements enable \name to increase message throughput, reduce latency, and consume over $100\times$ less bandwidth than Pung and Riposte, dropping the end to end cost of running a realistic whistleblowing application by~$6\times$. 
\end{abstract}

\section{Introduction}

Secure messaging apps and TLS protect the confidentiality of data in transit.
However, transport-layer encryption does little to protect sensitive communications \textit{metadata},
which can include the time of a communications session, the identities of the
communicating parties, and the amount of data exchanged.
As a result, state-sponsored intelligence gathering and surveillance programs~\cite{metadata,article1,article2}, particularly those targeted at journalists and dissidents~\cite{unesco,UN}, continue to thrive \rev{-- even in strong democracies like the United States~\cite{article3,article4}}. Anonymity systems such as Tor~\cite{tor}, or the whistleblowing tool SecureDrop~\cite{securedrop,securedropreport}, attempt to hide communications metadata, but they are vulnerable to traffic analysis if an adversary controls certain key points in the network~\cite{breaktor,HB13,JWJ+13}.

A host of systems can hide metadata %while targeting a variety of communication settings (e.g., point to point messaging, anonymous broadcast, group messaging, etc) 
with cryptographic security guarantees (e.g., Riposte~\cite{riposte}, Talek~\cite{talek}, P3~\cite{P3}, Pung~\cite{pung}, Riffle~\cite{riffle}, Atom~\cite{atom}, XRD~\cite{xrd}).
Unfortunately, these systems generally use heavy public-key cryptographic tools and incur high communication costs, making them difficult to deploy in practice. 
Another class of systems provides a differential privacy security guarantee (e.g., Vuvuzela~\cite{vuvuzela}, Alpenhorn~\cite{alpenhorn}, Stadium~\cite{stadium}, Karaoke~\cite{karaoke}).
These systems offer high throughput and very low communication costs, but their security guarantees degrade with each round of communication, making
them unsuitable for communication infrastructure that must operate over a long period of time.

This paper presents \name, a metadata-hiding communication system with cryptographic security 
that makes both practical and asymptotic improvements over prior work. %, relying only on lightweight symmetric cryptographic primitives and reducing communication costs by orders of magnitude compared to prior work. 
\name is a two-server system that provides
cryptographic security against an arbitrary number
of malicious clients and up to one malicious
server. This security guarantee falls between that of Riposte~\cite{riposte}, which provides security against at most one malicious server out of three total, and Pung~\cite{pung}, which can provide security even in the single-server setting where the server is malicious. \name only uses lightweight symmetric cryptographic primitives and introduces new protocols which allow it to improve throughput, reduce latency, consume over $100\times$ less bandwidth, and cost $6\times$ less to operate compared to these prior works. 

\paragraph{\name architecture.} To receive messages via \name, a client registers \emph{\rev{mailboxes}} with the servers,
who collectively maintain the contents of all the mailboxes.
After registration, the mailbox owner distributes the 
\emph{address} of a mailbox (i.e., a cryptographic identifier)
to \rev{each} communication peer via some out-of-band means. 
Given the address of a mailbox, any client
can use \name to upload a message into
that mailbox, without revealing to anyone except the mailbox owner which
mailbox the client wrote into.
Mailbox owners can fetch the contents of their
mailboxes at any time with any frequency they wish,
and \emph{only} the owner of a mailbox can fetch
its contents.

Crucially, \name hides which client wrote into which mailbox 
but \emph{does not} hide which client \emph{read} from which mailbox. This requires mailbox owners to check their mailboxes at a fixed frequency, although there need not be any synchronization between the rates that different owners access their mailboxes. 
As we will discuss, this form of metadata privacy fits well with our main application: whistleblowing. 

\paragraph{Technical overview.} We now sketch the technical ideas behind the design of \name. 
As in prior work~\cite{riposte},
\name servers hold a table of mailboxes 
secret-shared across two servers; 
clients use a cryptographic tool called a 
\emph{distributed point function}~\cite{dpfs} to 
write messages into a mailbox without 
the servers learning which mailbox a
client wrote into~\cite{OS97,riposte}. 
This basic approach to private writing leaves two important problems unsolved: handling read access to mailboxes and dealing with denial of service attacks from malicious users. 

The first contribution of \name is to allow 
mailbox reads and writes to be asynchronous. 
This allows \name clients to contact the system with any frequency they like, regardless of other clients' behavior. 
In contrast, prior systems such as Riposte, Pung, and Vuvuzela~\cite{riposte, pung, vuvuzela} require \emph{every} client to write before \emph{any} client can read, so the whole system is forced to operate in synchronized rounds. 
We are able to allow read/write interleaving in \name 
with a careful combination of encryption and rerandomization. 
At a high level: any client in \name can read from any mailbox,
but each read returns a fresh re-randomized encryption of the mailbox contents that
only the mailbox owner can decrypt.
In this way, even if an adversary reads the contents of
all mailboxes between every pair of client writes, the adversary
learns nothing about which honest client is communicating with which
honest client.

The second major challenge for messaging systems based on 
secret sharing~\cite{Chaum88,dissent,dissentnumbers,verdict,riposte,prio}
is to protect against malicious clients, who may corrupt the functioning
of the system by submitting malformed messages.
Since no server has a complete view of the message being written by each client, servers cannot immediately tell if a message is well-formed, e.g., whether it modifies only one mailbox or overwrites the contents of many mailboxes with garbage, destroying real messages that may have been placed in them. 
\name protects against such denial-of-service attacks using
a new \emph{auditing} protocol.
In a system with $n$ mailboxes,
\name's auditing protocol requires only 
$O(\lambda)$ communication between parties, for a fixed security parameter $\lambda$,
as well as $O(1)$ client side 
computation (in terms of AES evaluations and finite field operations). 
The analogous scheme in Riposte required $\Omega(\lambda\sqrt n)$ 
communication and $\Omega(\sqrt n)$ client computation~\cite{riposte}, and additionally
required a third non-colluding server. 
In practice, our new auditing scheme reduces overall computation costs for the client by
$8\times$ for a deployment with one million registered mailboxes. 

In addition to defending against malformed messages aimed at corrupting the whole database of mailboxes, \name must protect against targeted attacks. A malicious client could potentially send a correctly-formed message containing random content to a single mailbox in hopes of overwriting any content written to that mailbox by an honest client. We defend against this by assigning \emph{virtual addresses} to each mailbox. Each mailbox is accessed via a 128-bit virtual address, regardless of the actual number of mailboxes registered. The servers store and compute only over the number of actually registered mailboxes, not the number of virtual mailboxes. However, since virtual addresses are distributed at random over an exponentially large address space, a malicious client cannot write to a mailbox unless it knows the corresponding address. Section~\ref{sec:malicious} describes our protections against malicious clients in detail. 

\paragraph{Evaluation application.} 
We evaluate \name as a system for \rev{whistleblowers to send messages to journalists}
while hiding their communications metadata from network surveillance.
In this application, a journalist registers a mailbox for each source from
which she wishes to receive information.
The journalist then communicates her mailbox address to the source via,
for example, a one-time in-person meeting. 
Thereafter, the source can privately send messages to the journalist by 
dropping them off in the journalist's \name mailbox.
In this way, we can implement a cryptographically metadata-hiding
variant of the SecureDrop system~\cite{securedropreport}.

To provide whistleblowers with any reasonable guarantee of privacy, \name must provide its users with a degree of plausible deniability in the form of cover traffic. Otherwise, merely contacting the \name servers would automatically incriminate clients. 
As we will demonstrate, \name's low client computation and communication costs mean that an \name client implemented in JavaScript and embedded in a web page can generate copious cover traffic. Browsers that visit a cooperative news site's home page can opt-in to generate cover traffic for the system by running a JavaScript client in the backgound -- thereby increasing the anonymity set enjoyed by clients using \name to whistleblow -- without negatively impacting end-users' web browsing experience. 
We discuss this and other considerations involved in using \name for whistleblowing, e.g., how a journalist can communicate a mailbox address to a source, in Section~\ref{sec:dd}.

We implement \name and evaluate its performance on message sizes of up to 32KB, larger than is used in the evaluations of Pung~\cite{pung}, Riposte~\cite{riposte} and Vuvuzela~\cite{vuvuzela}. Recent high-profile whistleblowing events such as the whistleblower's report to the US intelligence community's inspector general~\cite{whistleblowingdoc} (25.3KB) or last year's anonymous New York Times op-ed~\cite{oped} (9KB) demonstrate that messages of this length are very relevant to the whistleblowing scenario. We also compare \name's performance to Pung~\cite{pung} and Riposte~\cite{riposte}, finding that \name matches or exceeds their performance, and conclude that \name reduces the dollar cost of running a metadata-hiding whistleblowing service by $6\times$ compared to prior work (see Figure~\ref{fig:costgraph}). On the client side, \name's computation and communication cost are both \emph{independent} of the number of users, at about $20$ms client computation and $5$KB communication overhead per message, enabling our new strategies for efficiently generating cover traffic. This represents over $100\times$ bandwidth savings compared to Riposte~\cite{riposte} and over $7,000\times$ savings compared to Pung for one million users. Although Vuvuzela operates under a very different security model, we compare the two systems qualitatively in our full evaluation, which appears in Section~\ref{eval}. 

%summary
In summary, we make the following contributions:
\begin{itemize}
\item The design and security analysis of \name, a metadata-hiding communication system that significantly reduces both communication and computation costs compared to prior work. 

\item A new auditing protocol to blindly detect malformed messages that is both asymptotically and practically more efficient than that of Riposte~\cite{riposte} while also removing the need for a third server to perform audits. 

\item An implementation and evaluation of \name that demonstrates the feasibility of our approach to metadata-hiding whistleblowing. Our open-source implementation of \name is available online at: \url{https://github.com/SabaEskandarian/Express}.
\end{itemize}

\section{Design Goals}\label{defs}

This section introduces the architecture of \name and describes our
security goals. 

An \name deployment consists of two servers that
collectively maintain a set of \textit{locked mailboxes}.
Each locked mailbox implements a private channel through which one client 
can send messages to another who has the 
secret cryptographic key to unlock that mailbox.
%Figure~\ref{overview} gives an overview of \name's architecture. 

\ignore{
%taking this figure out because it shows the 3-server setup.
%we could make a new figure if we want something here
\begin{figure}
\centering
\includegraphics[width=.6\linewidth]{figures/newOverviewFig.pdf}
\caption{\small \name is a three-server system where two servers maintain clients' mailboxes, service read/write requests, and perform the bulk of computations. A third lightweight auditing server blindly monitors client writes to catch malformed requests that could corrupt other clients' mailbox contents.}
\label{overview}
\end{figure}
}

To use \name, a client wishing to receive 
messages first registers a mailbox and gets a \emph{mailbox address}. From then on, 
any client who has been given the mailbox address can write messages to that mailbox, 
and the owner of that mailbox can check the mailbox for messages whenever it wants. 
We discuss how clients can communicate mailbox addresses to each other via a \emph{dialing} protocol in Section~\ref{dialing}. 

We consider an attacker who controls one of the two \name servers, any number of \name
clients, and the entire network.
The main security property we demand is that, after an honest client \textit{writes} a message
into a mailbox, the attacker learns nothing about which mailbox the
client wrote into. 
\rev{This corresponds to an anonymity guarantee where the sender of a given message cannot be distinguished among the set of all senders in a given time interval.}
We also require that an attacker who controls any number of malicious clients
cannot prevent honest clients from communicating with each other.
In other words, we protect against malicious clients mounting 
in-protocol denial-of-service attacks. We do not aim to protect against DoS attacks by malicous servers, nor against network-level DoS attacks, \rev{but we will describe how clients can incorporate straightforward checks to detect tampering by malicious servers}.

\subsection{\name API}
\name allows clients to register mailboxes, read the contents of mailboxes they register, and privately write to others' mailboxes. Clients interact with the servers via the following
operations:
%\xxx{It might be more clear to put these in pseudo-code notation.
%So $\textsf{Register}(y_0, k_1) \to \textsf{addr}$ or whatever.
%It might even be better to give an explicit crypto-style syntax 
%notation so that we can describe the complete protocols for these
%things later on in the paper in protocol boxes.}
%I'm not doing this for now because I think it's nice to first present the operations in plain language and a formal syntax might get kinda messy because these are interactive protocols. 

\smallskip
\noindent\textbf{Mailbox registration}. 
A client registers a new mailbox by sending the \name servers distinct \textit{mailbox keys}.
The servers respond with a \textit{mailbox address}. 
We say that a client ``owns'' a given mailbox if it holds the
mailbox's keys and address.

\smallskip
\noindent\textbf{Mailbox read}.
To read from a mailbox, the client sends the mailbox's address to the 
\name servers. 
The servers respond with the locked (i.e., encrypted) mailbox contents,
which the client can decrypt using its two mailbox keys together.

\smallskip
\noindent\textbf{Mailbox write}. 
To write to a mailbox, a client sends a specially-encoded \textit{write request} to the
\name servers that contains an encoding of both the address of the 
destination mailbox and the message to write into it.
No single \name server can learn either the 
destination address or message from the write request.

\subsection{Security Goals}\label{secgoals}

Based on the demands of our application to whistleblowing, \name primarily aims to provide privacy guarantees for \emph{writes} and not for reads. 
For example, \name hides who whistleblowers send messages to, but it does not hide the fact that journalists check their mailboxes. Below we describe \name's core security properties, which we formalize when proving security in Appendix~\ref{app:security}. %We briefly discuss how \name can be configured to provide security for other settings in Section~\ref{imp}. 

\noindent\textbf{Metadata-hiding}. We wish to hide who a given client is writing to from everyone except the recipient of that client's messages. To this end, our \emph{metadata-hiding} security guarantee requires that for each write into an \name mailbox, no adversary who controls arbitrarily many clients and one server can determine which mailbox that write targeted unless the adversary owns the target mailbox.

We formalize this requirement in Appendix~\ref{app:security}, where we show
that an adversary can \emph{simulate} its view of honest clients' requests
before seeing them, which proves that the adversary learns nothing from
requests that it can't generate on its own, \rev{except necessary information such as the time the write occurred and which client initiated it}. In particular, this means
the adversary does not learn the mailbox into which \rev{a} request writes, 
\rev{although it does learn that a write has occurred}.
A malicious server can stop responding to requests or corrupt the
contents of users' mailboxes, but we require that even an actively malicious server
cannot break our metadata-hiding property.

\noindent\textbf{Soundness}. \name must be resilient to malformed messages sent by malicious clients. This means no client can write to a mailbox it has not been authorized to access, even if it deviates arbitrarily from our protocol. We capture this requirement via a \emph{soundness game} in Appendix~\ref{app:security}, where we also prove that no adversary can win the soundness game in \name with greater than negligible probability in a security parameter.

\subsection{\rev{Design Approaches}}\label{designapproaches}

\rev{As there are many potential approaches to metadata-hiding systems, we now briefly sketch high-level decisions made regarding the goals of \name.}

\paragraph{\rev{Deployment scenario.}} \rev{ \name's primary deployment scenario is as a system for whistleblowing, where a source leaks a document or tip to a journalist. In this setting, unlike prior work, \name does not require the system to run in synchronous rounds. This is the deployment scenario on which we will focus the exposition of the \name system. However, since this is a one-directional communication setting (the source can send leaks to the journalist but not have an ongoing conversation), \name can also be used as a standard messaging protocol where clients, e.g., sources and journalists, send messages back and forth to each other. In this setting, similar to prior work, messaging in \name would progress in time epochs, with a server-determined duration for each round. } 

\paragraph{\rev{Differential vs cryptographic privacy.}}
\rev{\name belongs to a family of systems that provide cryptographic security guarantees. In contrast, a number of systems (e.g., Vuvuzela, Stadium, Karaoke~\cite{vuvuzela,stadium,karaoke}) provide differentially private security. The difference between the two types of systems lies in the amount of private metadata the systems leak to an adversary. Cryptographic security means that no information leaks -- the adversary learns nothing, even after observing many rounds of communication, about which clients are communicating with each other. In contrast, systems providing the differential privacy notion of security allow some quantifiable leakage of metadata. Thus, with differential privacy-based systems, an attacker can -- after a number of communication rounds -- learn who is communicating. In contrast, the security of \name does not degrade, even after many rounds of interaction. Thus, although differentially private systems offer faster performance, cryptographic security is preferable for frequently used privacy-critical applications.}

\paragraph{\rev{Distributing trust.}} 
\rev{There are two potential approaches to deployment of metadata-hiding systems. One approach envisions a grass-roots deployment model where large numbers of people or organizations decide to participate to run the system, and trust is distributed among the servers with tolerance for some fraction behaving maliciously. The approach taken by \name (and the works to which we primarily compare it~\cite{riposte,pung}) envisions a commercial infrastructure setting where only a small number of participants (e.g., for our example use case, the Wall Street Journal and the Washington Post) are needed to deploy the system with its full security guarantees. Given equal performance and security against an equal fraction of malicious servers, it is of course preferable to distribute trust over a larger number of parties. Thus designs that split trust between a small number of parties can be seen as one point on a tradeoff between having many parties that undergo some light vetting versus having few parties that undergo heavier vetting before being included as servers in the system. }

\subsection{Limitations}\label{limitations}

We now discuss some limitations of \name to aid in determining which scenarios are best-suited to an \name deployment. 

\rev{The most important limitation to consider when deciding whether to deploy \name is the issue of censorship. As mentioned above, \name relies on distributing trust among two servers. Thus, if traffic to either server is blocked, the system can no longer be accessed. Since we envision \name being deployed by major news organizations, \name would not be appropriate for use in countries with a history of blocking traffic to such organizations. This is true of any system that distributes trust over a small number of servers (or has easily identifiable traffic). However, there is a need to prevent surveillance even in countries with relatively open access to the internet. It is in this setting that \name can be an effective approach to metadata-hiding communication.}

\name allows mailbox owners to access their mailboxes and retrieve messages with whatever frequency they desire \rev{when being used for one-way communication}, but they must check mailboxes at regular intervals in order to maintain security because \name does not hide which mailbox a given read accesses. If a mailbox owner changes her mailbox-checking pattern based on the contents of messages received, this may leak something about who is sending her messages. Note that although this implies that mailbox owners should regularly check their mailboxes, it does not impose any restrictions on the frequency with which any owner checks her mailboxes -- it is not a fixed frequency required by the system and can be different for each mailbox owner. This is in contrast with prior works, which fix a system-wide frequency with which clients must contact the servers or require clients to always remain online. Clients sending messages through \name but not also receiving messages (e.g., whistleblowers sending tips or documents) do not need to regularly contact the system.

Another reason for mailbox owners to check their mailboxes regularly is that messages in \name are written into mailboxes by adding, not concatenating, the message contents to the previous contents of the mailbox. It is thus possible for a second message sent to the same mailbox to overwrite the original contents, causing the content to be clobbered when someone eventually reads it. 
This risk can be easily mitigated, however, because each mailbox is for one client to send messages to one other client, and servers zero-out the contents of mailboxes after they are read to make space for new messages. 
Looking ahead to our application, messages can be a leak of a single document, where more than one message is not required. If a journalist expects to receive many messages from the same source before she has a chance to read and empty the contents of a mailbox, one way to handle this situation is to register several mailboxes for the same source, so each message can be sent to a different mailbox. This way, as long as a journalist checks and empties her mailboxes before they have all been used, no messages will be overwritten. 

\rev{While \name's soundness property prevents in-protocol denial of service attacks by malicious clients, a malicious \name server can launch a denial of service attack by overwriting mailboxes with garbage. This attack will prevent communication through \name, but it can at least be detected. We discuss how clients can add integrity checks to their messages to achieve authenticated encryption over \name in Section~\ref{sec:full}. This means that a client receiving a garbage message will know that the message has been corrupted by a malicious server.}

\rev{Finally, like all systems providing powerful metadata-hiding guarantees, \name must make use of cover traffic to hide information about which users are really communicating via \name. Although necessary, cover traffic allows metadata-hiding systems to protect even against adversaries with strong background knowledge about who might be communicating with whom by providing plausible deniability to clients sending messages through \name. We further discuss cover traffic in Section~\ref{covertraffic}.}

\section{\name Architecture} 
\label{sec:tech}

This section describes the basic architecture of \name. Section~\ref{sec:malicious} shows how to add defenses to protect against disruptive clients, and Section~\ref{sec:full} states the full \name protocol. 
Section~\ref{sec:dd} discusses how to use \name for whistleblowing,
including how a mailbox owner communicates a mailbox address to senders 
and how to increase the number of \name users by deploying it on the web.

The starting point for \name is a technique for privately writing into mailboxes using distributed point functions~\cite{dpfs,OS97,riposte}. We review how DPFs can be used for private writing in Section~\ref{reviewdpfs}. A private writing mechanism alone, however, 
does not suffice to allow metadata-hiding communication. 
We must also have a mechanism to handle access control so that only the mailbox owner 
can access the contents of a given mailbox. 
We discuss a lightweight cryptographic access control system in Section~\ref{accesscontrol}, where we also explain how this combination of private writing and controlled reading enables 
metadata hiding without synchronized rounds. 

\subsection{Review: Private Writing with DPFs}\label{reviewdpfs}
We briefly review the technique used in Riposte~\cite{riposte} for allowing
a client to privately write into a database, stored in secret-shared form, 
at a set of servers.

\noindent\textbf{A na\"ive approach}. In \name, two servers -- servers $A$ and $B$ -- collectively hold 
the contents of a set of \emph{mailboxes}.
In particular, if there are $n$ mailboxes in the system and each mailbox holds
an element of a finite field $\F$, then we can write the contents
of all mailboxes in the system as a vector $D \in \F^n$. 
Each server holds an additive secret share of the vector $D$:
that is, server $A$ holds a vector
$D_A \in \F^n$ and server $B$ holds a vector $D_B \in \F^n$ such
that $D = D_A + D_B \in \F^n.$

Once a client registers a mailbox, another client with that mailbox's address can send messages or documents to the mailbox, which the mailbox owner can check at his or her
convenience. Although \name can support mailboxes of different sizes, size information can be used to 
trace a message from its sender to its receiver, so \name clients must 
pad messages, either all to the same size or to one of a few pre-set size options. 

To write a message $m \in \F$ into the $i$-th mailbox na\"ively, the \name client
could prepare a vector $m \cdot \textbf{e}_i \in \F^n$, where $\textbf{e}_i$
is the $i$th standard-basis vector (i.e., the all-zeros vector
in $\F^n$ with a one in coordinate $i$).
The client would then split this vector into two additive shares
$w_A$ and $w_B$ such that $w_A + w_B = m \cdot \textbf{e}_i$, 
and send one of each of these ``write-request'' vectors to each
of the two servers. The servers then process the write by setting:
\[ D_A \gets D_A + w_A \in \F^n \qquad D_B \gets D_B + w_B \in \F^n,\]
which has the effect of adding the value $m \in \F$ into
the contents of the $i$th mailbox in the system.

The communication cost of this na\"ive approach is large: updating a single
mailbox requires the client to send $n$ field elements to each server. 

\noindent\textbf{Improving efficiency via DPFs}. Instead of sending such a large message, the client uses \emph{distributed 
point functions} (DPFs)~\cite{dpfs,dpfs2,fss} to compress these vectors. DPFs allow a client to split a point function $f$, in this case a function mapping \rev{indices} in the client's vector to their respective values, into two \emph{function shares} $f_A$ and $f_B$ which individually reveal nothing about $f$, but whose sum at any point is the corresponding value of $f$. More formally, let $f_{i^*,m}\colon [N] \to \F$ be a point function that evaluates to 0 at every point $i\in[N]$ except that $f(i^*) = m \in \F$. A DPF allows a client holding $f_{i^*,m}$ to generate shares $f_A$ and $f_B:[N]\to \F$ such that:
\begin{itemize}
  \item[(i)] an attacker who sees only one of the two shares learns nothing
            about $i^*$ or $m$, and
  \item[(ii)] for all $i \in [N]$, $f_{i^*,m}(i) = f_A(i) + f_B(i) \in \F$.
\end{itemize}
Moreover, in addition to supporting messages $m\in \F$, the latest generation of DPFs~\cite{dpfs2} 
allow for any message $m\in\{0,1\}^*$. 
When using these DPFs with
security parameter $\lambda$, each function share ($f_A$ and $f_B$) has 
bitlength $O(\lambda\log N + |m|)$. In addition to general improvements in efficiency over prior DPFs, our choice of DPF scheme will enable new techniques that we introduce in Section~\ref{sec:malicious}. 

In essence, the client can use DPFs to \textit{compress} the vectors
$w_A$ and $w_B$, which reduces the communication cost to 
$O(\lambda \cdot \log N + \log |\F|)$ bits, 
when instantiated with a pseudorandom function~\cite{GGM84}
using $\lambda$-bit keys. Upon receiving $f_A$ and $f_B$ the servers can evaluate them at each point $i\in [n]$ to recover the vectors $w_A$ and $w_B$ and update $D_A$ and $D_B$ as before. 

\subsection{Hiding Metadata without Synchronized Rounds}\label{accesscontrol}

Private writing alone does not suffice to provide metadata-hiding privacy. In order to achieve this, we also need to control \emph{read} access to mailboxes. Otherwise, a network adversary who controls a single client could read the contents of all mailboxes between each pair of writes and learn which client's message modified which mailbox contents, even if messages are encrypted. Prior works such as Pung~\cite{pung} or Riposte~\cite{riposte} prevent this attack by operating in batched rounds in which many clients write messages before any client is allowed to read. 
The key feature that allows \name to hide metadata without relying on synchronized rounds is that a message can only be read by the mailbox owner to whom it is sent. \name can make messages available to mailbox owners immediately as long as (1) the messages remain inaccessible to an attacker who does not own the mailbox whose contents have been modified and (2) the attacker cannot tell which mailbox has been modified if it does not own the modified mailbox. Thus, all we need to successfully hide metadata without rounds is a mechanism for access control that satisfies these two requirements. \rev{While an adversary who continuously reads from all mailboxes could then still learn \emph{when} a write occurs, it would learn nothing about which mailbox contents were modified as a result. }

\name includes a lightweight cryptographic approach to access control that relies on symmetric encryption, does not require the servers to undertake any user authentication logic when serving read requests, and enables useful implementation optimizations. 
A client registering a mailbox uploads keys $k_A$ and $k_B$ to servers $A$ and $B$ respectively, and the servers encrypt stored data using the respective key for each mailbox, decrypting before making modifications and re-encrypting after. The re-encryption ensures that the contents of every mailbox are rerandomized after each write, so an attacker attempting to read the contents of a mailbox for which it does not have both keys learns nothing from reading the encrypted contents of the mailbox, including whether or not those contents have changed. This property still holds even if only one of the two servers carries out the re-encryption, so its security is unaffected if a malicious server does not encrypt or re-encrypt mailboxes. Our implementation encrypts mailbox contents in counter mode, so re-encryption simply involves subtracting the encryption of the previous count and adding in the new one. Since these operations are commutative, we can implement an optimization where re-encryption is not done on every write but only \rev{when a read occurs after one or more writes}. This makes our approach -- which requires only symmetric encryption -- more efficient than a straightforward one based on public key encryption, e.g., where the contents of each mailbox are encrypted under the owner's public key when a read is requested. %To decrypt the contents of an encrypted mailbox after downloading the share held by each server, the reader needs to add together the shares and subtract off the encrypted counter under both $k_A$ and~$k_B$.

\section{Protecting Against Malicious Clients}\label{sec:malicious}

The techniques in Section~\ref{sec:tech} suffice to provide privacy if all clients behave honestly, but they are vulnerable to disruption by a malicious client. In the scheme described thus far, one malicious client can corrupt the state of the two servers with a single message. To do so, the malicious client sends DPF shares $f_A$ and $f_B$ to the servers that expand into vectors $w_A$ and $w_B$ such that $w_A + w_B = v \in \F^n$, where $v$ is non-zero at many (or even all) coordinates. 
A client who submits such DPF key shares can, with one message to the servers, write
into \emph{every} mailbox in the system, 
corrupting whatever actual messages each mailbox may have held. 

\name protects against this attack with an \emph{auditing} protocol that checks to make sure $(w_A + w_B) \in \F^n$ is a vector with at most one non-zero component. In other words, the servers check that each write request updates only a single mailbox. Any write request that fails this check can be discarded to prevent it form corrupting the contents of $D_A$ and $D_B$. 
Riposte~\cite{riposte}, a prior work that also audits DPFs to protect against malicious clients, uses a three-server auditing protocol that 
requires communication $\Omega(\lambda \sqrt n)$ and client computation $\Omega(\sqrt n)$
for a system with $n$ mailboxes, where $\lambda$ is a security parameter. However, their protocol takes advantage of the structure of a particular DPF construction that is less efficient than the one used by~\name. Applying their protocol to the more efficient DPFs used in \name would require client communication and computation $\Omega(\lambda n)$ and $\Omega(n)$ respectively as well as the introduction of an
additional non-colluding server.
This linear bandwidth consumption \textit{per write} would create a communication bottleneck in \name and increase client-side computation costs significantly. Moreover, adding a third server -- and requiring that two out of three servers remain honest to guarantee security -- would dramatically reduce the practicality of the \name system. To resolve this issue, we introduce a new auditing protocol that drops client computation (in terms of AES evaluations and finite field operations) to $O(1)$ and communication to $O(\lambda)$ while simultaneously \emph{eliminating the need for a third server} to perform audits. We describe our two-party auditing protocol in Section~\ref{sec:tech:audit}. 

Although auditing ensures that DPFs sent by clients must be well-formed, an attacker targeting \name has a second avenue to disrupting the system. 
Instead of attempting to corrupt the entire set of mailboxes -- an attack prevented by the auditing protocol -- a malicious client can write random data to \emph{only one mailbox} and corrupt any message a source may send to a journalist over that mailbox. Although this attack is easily detectable when a journalist receives a random message, it still allows for easy disruption of the system and cannot be blocked by blind auditing because the disruptive message is structured as a legitimate write. 

We defend against this kind of targeted disruption with a new application of \emph{virtual addressing}. At a high level, we assign each mailbox a unique 128-bit virtual address and modify the system to ensure that writing into a mailbox requires knowing the mailbox's virtual address.
In this way, a malicious user cannot corrupt the contents of an honest user's
mailbox, since the malicious user will not be able to guess the honest 
user's virtual address. We discuss this defense and its implications for other components of the system in Section~\ref{virtaddrs}.

\subsection{Auditing to Prevent Disruption}\label{sec:tech:audit}

This section describes our auditing protocol. We begin with a rough outline of the protocol before stating the security properties required of it and then explaining the protocol in full detail. At a high level, our auditing protocol combines the verifiable DPF protocol of Boyle et al.~\cite{dpfs2}, which only provides security against \emph{semi-honest} servers, with \emph{secret-shared non-interactive proofs} (SNIPs) first introduced by the Prio system~\cite{prio} (and later improved and generalized by Boneh et al.~\cite{BBC+19}) to achieve security against \emph{fully malicious} servers. We explain each of these ideas and how we combine them below. 

Let the vectors $w_A$ and $w_B\in\F^n$ be the outputs that servers $A$ and $B$ recover after evaluating $f_A(i)$, $f_B(i)$,  for $i\in[n]$. Note that even DPFs that output a message in $\{0,1\}^*$ begin with an element of a $\lambda$-bit field $\F$ and expand it, so for the purposes of our auditing protocol, we can assume that every DPF output is an element of $\F$. 
We say that $w = w_A + w_B \in \F^n$ is a \emph{valid} write-request vector if it is a vector in $\F^n$ of Hamming-weight at most one. The goal of the auditing protocol is to determine whether a given write-request vector is valid. 

The observation of Boyle et al.~\cite{dpfs2} is that the following $n$-variate 
polynomial equals zero with high probability over the random choices of $r_1,...,r_n$ if and only if (1) there is at most one nonzero $w_i$ and (2) $m=w_i$ for the nonzero value of~$w_i$
\begin{align*}
f(r_1,...,r_n)=(\Sigma_{i\in[n]}w_ir_i)^2-m\cdot(\Sigma_{i\in [n]}w_ir_i^2). 
\end{align*}

This polynomial roughly corresponds to taking a random linear combination of the elements of $w$ -- using randomness shared between the two servers -- and checking that the square of the linear combination and the sum of the terms of the linear combination squared are the same. Using the fact that it is easy to compute linear functions on secret-shared data, the two sums in the equation above can be computed non-interactively by servers $A$ and $B$. Boyle et al. suggest using a multiparty computation between the servers to compute the remaining multiplications and check whether this polynomial in fact equals zero, thus determining whether the DPF is valid.

The problem with this approach is that it is only secure against \emph{semi-honest} servers. A malicious server can deviate from the protocol and potentially learn which entry of $w$ is non-zero. For example, suppose a malicious server $A$ is interested in knowing whether a write request modifies an index $i^*$. It runs the auditing protocol as described, but it replaces its value $w_{Ai^*}$ with a randomly chosen value $w_{Ai^*}'$. If $w_{Ai^*}+w_{Bi^*}=0$, i.e., $i^*$ was not the nonzero index of $w$, this modification will cause the audit to fail because the vector $w'$ that includes $w_{Ai^*}'$ instead of $w_{Ai^*}$ no longer has hamming weight one. Thus the malicious server learns that the write request would not have modified index $i^*$. On the other hand, if $w_{Ai^*}+w_{Bi^*}\neq0$, i.e., $i^*$ was the nonzero index of $w$, the inclusion of $w_{Ai^*}'$ still results in a vector $w'$ of hamming weight one, and the auditing protocol passes. Thus the malicious server can detect whether or not the write request modifies index $i^*$ by observing whether or not auditing was successful after it tampers with its inputs. 

To prevent this attack we make use of a SNIP proof system~\cite{prio,BBC+19}. In a SNIP, a client sends each server a share of an input $w$ and an arithmetic circuit $\textsf{Verify}()$. The client then uses a SNIP proof to convince the servers, who only hold shares of $w$ but may communicate with each other, that $\textsf{Verify}(w)=1$. An important property of a SNIP proof system is that it provides security against \emph{malicious} servers. That is, even a server who deviates from the protocol cannot abuse a SNIP to learn more about $w$. SNIP proofs require computation and communication linear in the number of multiplications between secret values in the statement being proved. Our approach is to instantiate the DPF verification protocol of Boyle et al.~\cite{dpfs2} \emph{inside} of a SNIP to protect it from potentially malicious servers. Since the Boyle et al. verification protocol only requires two multiplications between shared values, the squaring and the multiplication by $m$, this results in a constant-sized SNIP (i.e. size $O(\lambda)$). %The efficiency of the protocol arises from doing as much work as possible \emph{before} invoking the SNIP, thus requiring very little to be verified in the proof itself. 

\noindent\textbf{Properties of auditing protocol.}
Before describing our protocol in detail, we recall the completeness,
soundness, and zero-knowledge properties we require of the auditing
protocol (adapted from those of 
Riposte's auditing protocol~\cite{riposte}).

\begin{denseitemize}
\item \textbf{Completeness.}
  If all parties are honest, the audit always accepts. 

\item \textbf{Soundness against malicious clients.}
  If $w$ is not a valid write request (i.e., the client is malicious) 
  and both servers are honest, 
  then the audit will reject with overwhelming probability.

\item \textbf{Zero knowledge against malicious server.}
  Informally: as long as the client is honest, 
  an active attacker controlling at most one server 
  learns nothing about the write request $w$, apart from the fact that it is valid.
  That is, for any malicious server there exists an efficient algorithm
  that simulates the view of the protocol execution with
  an honest second server and an honest client.
  The simulator takes as input only the public system parameters
  and the identity of the malicious server.
\end{denseitemize}

\noindent\textbf{Our auditing protocol.} 
Our auditing protocol proceeds as follows. 
We assume that data servers $A$ and $B$ share a private stream of
random bits generated from a pseudorandom generator with a seed $r$.
In practice, the servers 
generate the random seed by agreeing on a shared secret at setup and using
a pseudorandom generator to get a new seed for each execution of this protocol. We will describe the protocol using a SNIP as a black box and give details on how to instantiate the SNIP in Appendix~\ref{app:snips}.

At the start of the protocol, server $A$ holds $r$ and $w_A \in \F^n$ and 
server $B$ holds $r$ and $w_B \in \F^n$, both generated by evaluating the DPF shares sent by the client at each registered mailbox address. The client holds the index $i^*$ at which $w$ is non-zero as well as the values of $w_A$ and $w_B$ at index $i^*$, which it computes from the function shares $f_A$ and $f_B$ that it sent to the servers. 
\begin{enumerate}

\item \emph{Servers derive proof inputs.}\\ The servers begin by sending the random seed $r$ used to generate their shared randomness to the client. 

Next, they compute shares $m_A$ and $m_B$ of $m$, the value of $w$ at its non-zero entry, which is simply the sum of all the elements of $w_A$ or $w_B$ respectively because all but one entry of $w$ should be zero. That is, the servers compute
\begin{align*}
m_A\gets\Sigma_{i\in[n]}w_{Ai}\quad\text{and}\quad m_B\gets\Sigma_{i\in[n]}w_{Bi}.
\end{align*}
Then servers $A$ and $B$ use their shared randomness $r$ to generate a random vector $r=(r_1,...,r_n)\in\F^n$ and then compute the vector of squares $R=(r_1^2,...,r_n^2)\in\F^n$. After this, they compute shares of the ``check'' values $c=\langle w,r\rangle$ and $C=\langle w,R\rangle$:
      \begin{align*}
        c_A \gets \langle w_A,r \rangle \in \F, \quad 
        C_A \gets \langle w_A,R \rangle \in \F\\
        c_B \gets \langle w_B,r \rangle \in \F, \quad
        C_B \gets \langle w_B,R \rangle \in \F
      \end{align*}
      Here the notation $\langle x, y\rangle$ represents the inner product between vectors $x,y\in\mathbb{F}^n$, defined as $\Sigma_{i=1}^{n}x_iy_i$.
      
      At this point, the servers hold values $m_A, c_A, C_A$ and $m_B,c_B,C_B$ respectively. 
      
    \item \emph{Client derives proof inputs.}\\Since the client knows the seed $r$, the index $i^*$, and the values of $w_A$ and $w_B$ at index $i^*$ (and as a consequence the value of $m=w_{Ai^*}+w_{Bi^*}\in\F$), the client can compute the random values $r^*,r^{*2}$ that will be multiplied by the $i^*$th entries of $w_A$ and $w_B$. Since all the values other than the $i^*$th entry of $w$ are zero, the client need not compute them.
  Thus the client computes the check values $c^*=r^*\cdot (w_{Ai^*}+\cdot w_{Bi^*})$ and $C^*=r^{*2}\cdot (w_{Ai^*}+\cdot w_{Bi^*})$. Note that this allows the client to compute the check values in only $O(1)$ time even though the servers must do $O(n)$ work to find them. 

\item \emph{Proof computation and verification.}\\To complete the proof, the client prepares a SNIP proof $\pi=(\pi_A, \pi_B)$, sends $\pi_A$ to server $A$, and sends $\pi_B$ to server $B$. The servers then verify the proof, communicating with each other as needed. The SNIP proves that 
\begin{align*}
c^{2}-m\cdot C=0
\end{align*}
where $c\gets c_A+c_B$ and $C\gets C_A+C_B$. 

The soundness property of the SNIP proof guarantees that the servers will only accept the proof if the statement is true, and the zero-knowledge property of the proof guarantees that as long as one server is honest, the servers learn nothing from receiving the SNIP proof that they did not know before receiving it (even if one server is fully malicious). Note that this statement only involves two multiplications: $c\cdot c$ and $m\cdot C$.  
\end{enumerate}

We sketch the instantiation of the proofs used in our auditing protocol as well as the security analysis of the full auditing protocol in Appendix~\ref{app:snips}. Full details and a security proof for the SNIP proof system itself can be found in the Prio paper~\cite{prio} and the follow-up work of Boneh et al.~\cite{BBC+19}.

\subsection{Preventing Targeted Disruption}\label{virtaddrs}

We now describe how \name prevents a targeted attack where a malicious client writes random data to a single mailbox to corrupt its contents. \name servers assign each mailbox a 128-bit \emph{virtual address} and ensure that a client can only write to a mailbox if it knows the corresponding virtual address. 

To implement this, the \name servers maintain an array of $n$ \textit{physical} mailboxes,
but they also maintain an array of $2^\lambda$ \textit{virtual} mailboxes,
where $\lambda \approx 128$ is a security parameter.
The two data servers assign a unique virtual address to each physical mailbox, and
they collectively maintain a mapping -- a \textit{page table} -- 
that maps each active virtual address to a physical mailbox.
Since the virtual addressing scheme's only goal is to prevent misbehavior by malicious clients, the servers
both hold the contents of the page table (i.e., the list of active virtual
addresses and their mapping to physical addresses) in the clear.
The virtual-address space (around $2^{128}$ entries) 
is vastly larger than the number of physical mailboxes (around $2^{20}$, perhaps),
so the vast majority of the virtual-address space goes unused.

When a client registers a new mailbox, the servers
both allocate storage for a new physical mailbox, assign a new random
virtual address to this physical mailbox, and update their page tables. 
The address can either be chosen by one server and sent to the other or generated separately by each server using shared randomness. 
The servers then return the virtual and physical addresses for the mailbox to the client. 
As mentioned above, a mailbox owner must communicate its address to others to receive messages. We describe how this can be achieved when we discuss dialing in Section~\ref{dialing}. The contents of the tables stored at the servers are shown in Figure~\ref{fig:virtaddr}.

When preparing a write request, the client prepares DPF shares $f_A$ and $f_B:2^\lambda\to \F$ as if it were going to write in to the exponentially large address space. However, instead of evaluating shares at every $i\in[2^\lambda]$, the \name servers only evaluate $f_A$ and $f_B$ at the currently active virtual addresses. In this way, the number of DPF evaluations the servers compute remains linear in the number of registered mailboxes, even though clients send write requests as if the address space were exponentially large. A client who does not know the address for a given mailbox has a chance negligible in $\lambda$ of guessing the correct virtual address. Note that this technique is only possible because \name uses a DPF whose share sizes are \emph{logarithmic} in the function domain size. Using virtual addresses with older square-root DPFs would result in infeasibly large message sizes and computation costs.

Although virtual addressing, when combined with auditing, does fully resolve the issue of disruptive writes, it does not fully abstract away physical addresses. Our auditing protocol critically relies on the client knowing the index of the mailbox it wants to write to among the set of all mailboxes. As such, a client preparing to send a message must be informed of both the virtual and physical addresses of the mailbox it wishes to write to. Fortunately, the size of a physical address is much smaller than that of a virtual address (about 20 bits  compared to 128 bits for a virtual address), so communicating both addresses at once adds little cost to only sending the virtual address.

\section{Full \name Protocol}\label{sec:full}

\begin{figure}
\centering 
\includegraphics[width=.8\linewidth]{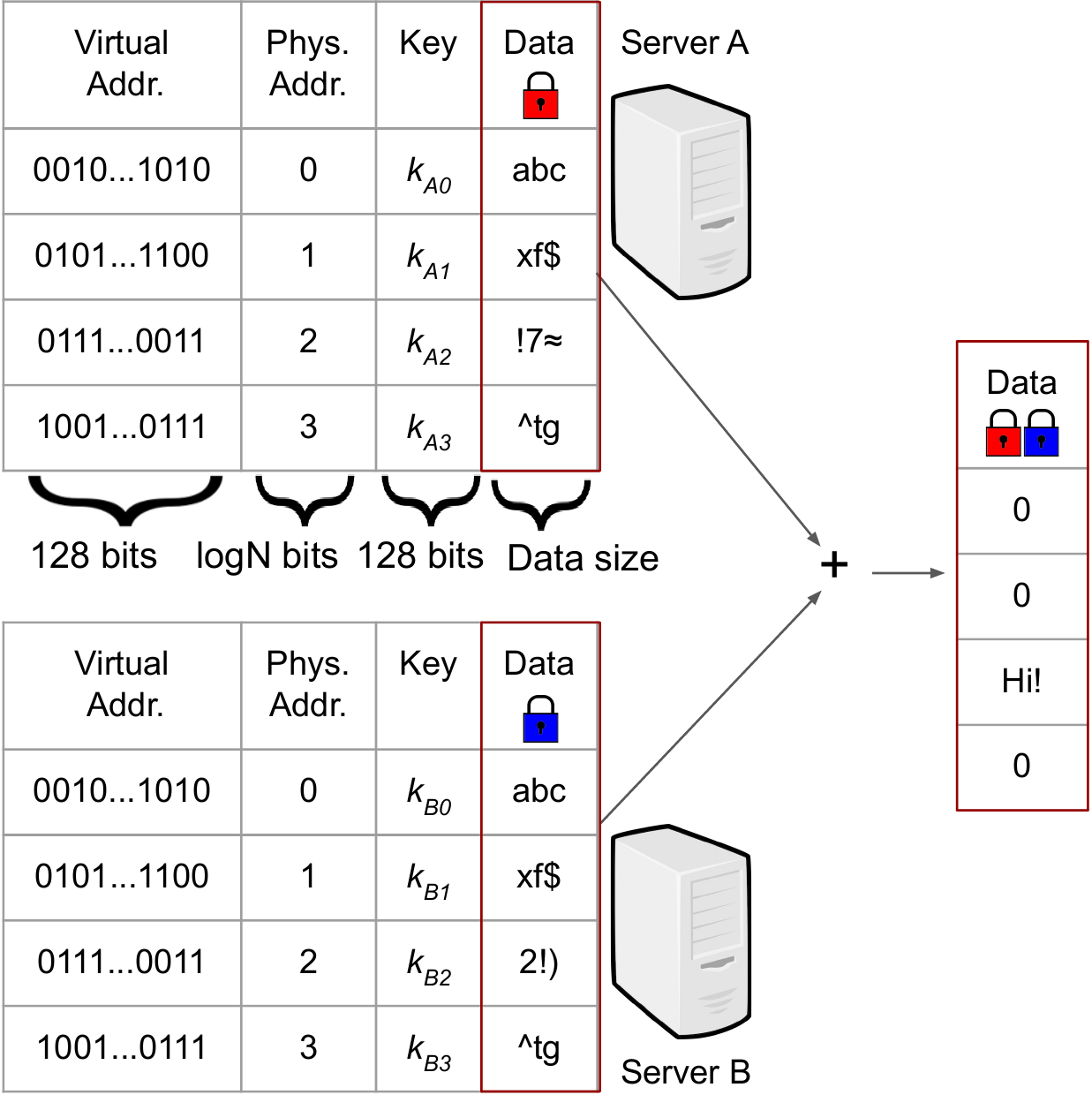}
\caption{\small Contents of the tables held by servers in \name. Each server stores the conversion from virtual to physical addresses and a distinct key for each mailbox. Combining data from the two servers allows a user holding both keys for a given mailbox to read its contents. }
\label{fig:virtaddr}
\end{figure}

This section summarizes the full \name protocol described incrementally in Sections~\ref{sec:tech} and~\ref{sec:malicious}. We will describe the protocol in full but refer to the steps of the auditing protocol as described in Section~\ref{sec:tech:audit} to avoid repeating the protocol spelled out in detail there. We prove security in Appendix~\ref{app:security}. After describing the protocol, we describe how clients can add message integrity to their \name messages. 

We assume that a mailbox owner has already set up a mailbox with virtual address $v$ and physical address $p$ and communicated $(p,v)$ to another client. We discuss options for communicating $p$ and $v$ to other clients (``dialing'') in Section~\ref{dialing}. We also assume that the mailbox owner holds mailbox keys $k_A$ and $k_B$, which it has sent to servers $A$ and $B$ respectively, and the client has a message $m$ that it wants to send. 
Server $A$ holds vectors $V$ of virtual addresses, $K_A$ of keys, and $D_A$ of mailbox contents, each of length $n$. Server $B$ likewise holds $V$, $K_B$ and $D_B$. Each entry of $D_A$ and $D_B$ is encrypted in counter mode under the corresponding key in $K_A$ or $K_B$. 
Figure~\ref{fig:virtaddr} shows the information held by servers $A$ and $B$ for each mailbox. 

\noindent\textbf{Sending a message}. 
\begin{enumerate}
\item The client generates DPF shares $f_A$ and $f_B$ of the point function $f_{v,m}:[2^\lambda]\to \{0,1\}^{|m|}$. It sends $f_A$ to $A$ and $f_B$ to~$B$. 

\item $A$ and $B$ evaluate $w_A\gets (f_A(V_1), ..., f_A(V_n))$ and $w_B\gets (f_B(V_1), ..., f_B(V_n))$. They use their shared randomness to generate a seed $r$ to be used in the auditing protocol, send it to the client, and prepare the server inputs to the SNIP. 

\item The client prepares the client inputs to the SNIP and generates the corresponding proof $\pi=(\pi_A,\pi_B)$. It sends $\pi_A$ to server $A$ and $\pi_B$ to server $B$. 

\item The servers verify the SNIP proof $\pi$, and they abort if the verification fails. 

\item Servers $A$ and $B$ decrypt each $D_{Ai}$ with $K_{Ai}$ and each $D_{Bi}$ with key $K_{Bi}, i\in[n]$. Next, they set $D_{Ai}\gets D_{Ai}+w_{Ai}$ and $D_{Bi}\gets D_{Bi}+w_{Bi}$ before re-encrypting the new values of $D_{Ai}$ and $D_{Bi}$ under the same keys (with new nonces).

\end{enumerate}

\noindent\textbf{Checking a mailbox}. 
\begin{enumerate}
\item The mailbox owner sends $(p,v)$ to servers $A$ and $B$ to request to read from the mailbox at physical address $p$.

\item Servers $A$ and $B$ check that virtual address $v$ corresponds to physical address $p$ and then send $D_{Ap}$ and $D_{Bp}$ as well as the nonce used for the encryption of each value. Then they set the values of $D_{Ap}$ and $D_{Bp}$ to fresh encryptions of $0$ under $K_{Ap}$ and $K_{Bp}$ respectively, emptying the mailbox. Since only the mailbox owner and whoever wrote into a mailbox know $p$ and $v$, and the virtual address space for $v$ is huge, clients cannot read or delete the contents of each other's mailboxes. 

\item The mailbox owner decrypts the values of $D_{Ap}$ and $D_{Bp}$ it received with keys $k_A$ and $k_B$ to get messages $m_{Ap}$ and $m_{Bp}$. It outputs message $m \gets m_{Ap} + m_{Bp}$. 
\end{enumerate}

\noindent\textbf{Complexity}. \rev{Table}~\ref{complexityFig} shows the communication and computational complexity of sending a message in \name for the client and the servers. We measure computational complexity in terms of AES evaluations and field operations separately to better capture the computation being carried out by each party. The complexities reported are the sum of costs due to DPF evaluation, re-encryption, and auditing. 

Client communication includes sending a DPF whose shares are functions with domain size $2^\lambda$, resulting in DPFs of size $O(\lambda^2+|m|)$. As discussed in Section~\ref{sec:tech:audit}, the auditing protocol involves the client sending a proof of size $O(\lambda)$.

Cryptographic costs on the client include generating DPF shares and evaluating the DPF at one point, both of which cost $O(\lambda+|m|)$. The server, on the other hand, must evaluate the DPF at each address and also generate the random vectors needed for the auditing protocol. The number of field operations for each party come directly from the costs incurred during the auditing protocol. 

\begin{table}\small
\centering
\begin{tabular}{rll}\toprule
&Client&Servers\\\midrule
Communication&$O(\lambda^2+|m|)$&$O(\lambda)$\\
AES Evaluations&$O(\lambda + |m|)$&$O(n(\lambda+|m|))$\\
Field Operations&$O(1)$&$O(n)$\\\bottomrule
\end{tabular}
  \caption{\small Complexity of processing a single write in \name with $n$ mailboxes, message size |m|, and security parameter $\lambda$. Communication measures bits sent only. }
\label{complexityFig}
\end{table}

\paragraph{\rev{Message integrity}}. \rev{The core \name protocol does not protect message integrity, so a malicious server could undetectably corrupt the contents of a mailbox. This can be remedied in a straightforward way by using MACs. Given that the clients writing to and reading from a mailbox share a secret to establish an address, they could instead use a master secret to derive (e.g., via a hash) a mailbox address and a MAC key. Messages written to \name could then be MACed before being split into shares via a DPF. Since a MAC-then-encrypt approach provides authenticated encryption when the encryption is done in counter mode~\cite{cryptobook} (as we do), \name with MACed messages provides authenticated encryption on the messages.}

\section{Using \name for Whistleblowing}\label{sec:dd}

Having described the core \name system itself, this section covers two important considerations involved in using \name for whistleblowing: plausible deniability for whistleblowers and agreeing on mailbox addresses.

First, in order to provide meaningful security in practice, \name must hide both the recipient of a given client's message as well as \emph{whether} a client is really communicating with a
journalist. We discuss how to provide plausible deniability for \name clients in Section~\ref{covertraffic}. Second, to set up their communication channel, a journalist and whistleblower
must agree on a mailbox address through which they will communicate. This
can be done either in person or via a dialing protocol as described in Section~\ref{dialing}. 

\subsection{Plausible Deniability}\label{covertraffic} 
We now turn to
the goal of hiding whether or not a client is really communicating with a
journalist. If \name were only to be used by journalists and their sources, it
would fundamentally fail to serve its purpose. Although no observer could
determine which journalist a given message was sent to, the mere fact that
someone sent a message using \name reveals that she must be a source for some
journalist. In order to provide plausible deniability to whistleblowers, 
other, non-whistleblowing users must send messages through the system as 
well. 
 
One solution for this problem, first suggested in the Conscript system~\cite{conscript}, is to have cooperative web sites embed Javascript in their pages that generates and submits dummy requests. For example, the New York Times home page could be modified such that each time a consenting user visits (or for every $n$th consenting user that visits), Javascript in the page directs the browser to generate a request to a special write-only \name dummy address that the servers maintain but for which each server generates its own encryption key not known to any user. Since no user has the keys to unlock this address, messages written to it can never be retrieved, and \name's metadata-hiding property guarantees that messages sent to the dummy address are indistinguishable from real messages sent to journalists. This enables creating a great deal of cover traffic and gives clients who really are whistleblowers plausible deniability, \rev{as long as communication patterns between users and the \name servers are the same for real and cover traffic}. Moreover, only one large organization needs to implement this technique for all news organizations who receive messages through \name to benefit from the cover traffic. 
The exact quantity of cover traffic required to provide the appropriate level of protection for whistleblowers using \name is ultimately a subjective decision, but the \name metadata-hiding guarantee implies that a whistleblower sending a message through \name cannot be distinguished among the set of all users sending messages through \name, be they real messages or cover messages. 

\name is particularly well-suited to this approach for two reasons: aligned incentives and low client side costs. First, participating news organizations all have web sites and a natural incentive to direct cover traffic to the \name system. \rev{Even if only one or a few organizations among them are willing to risk adding dummy traffic scripts to their pages, everyone benefits. In fact, even the same organizations who are willing to host the \name servers could add the dummy scripts to their own news websites to ensure adequate cover traffic.} Second, as demonstrated in Section~\ref{eval}, \name's extremely low client computation and communication requirements lend themselves particularly well to this approach, since the client can easily run in the background on a web browser, even in computation or data-restricted settings such as mobile devices. We empirically evaluate a JavaScript version of the \name client in Section~\ref{clienteval} and find it imposes very little additional cost on the browser.

%The exact quantity of cover traffic required to provide the appropriate level of protection for whistleblowers using \name is ultimately a subjective decision. The \name metadata-hiding guarantee implies that a whistleblower sending a message through \name cannot be distinguished among the set of all users sending messages through \name, be they real messages or cover messages. 
%Thus it may suffice for some use cases to have only a hundred or a thousand messages sent each day, so that anyone who sends a message through \name is part of too large a cohort to, e.g., be simultaneously arrested or charged for leaking a secret. On the other hand, leaks of very important or sensitive information may be better suited to an anonymity set of closer to a million messages a day. 
%Regardless of the amount of cover traffic required by the deployment scenario, \name's communication savings over prior work applies for \emph{each message sent} through the system, so it has the same proportional reduction in cover traffic costs regardless of whether one hundred or one million cover messages need to be sent each day.

Using in-browser JavaScript to give users plausible deniability raises a number of security and ethical concerns. We defer to the Conscript paper~\cite{conscript} for an extensive discussion of the security and ethical considerations involved and note that it is also possible to generate cover traffic for \name using a standalone client, as is common in other systems. 

\subsection{Dialing}\label{dialing}

In order to use \name, a journalist and source must agree on the mailbox address which the source will use to send messages to the journalist. Journalists who make initial in-person contact with sources could, for example, distribute
business cards with mailbox addresses on them in QR code form. 

Journalists and sources could also use a more expensive \emph{dialing} protocol to share an initial secret before moving to \name to more efficiently communicate longer or more frequent messages. 
\rev{One approach to dialing that can conveniently integrate with \name is to use an improved version of the Riposte~\cite{riposte} system as a dialing protocol. Riposte offers a public broadcast functionality that progresses in fixed time epochs, where anyone can announce a message to the world. Since journalists can easily post their public keys online, e.g., next to their name at the bottom of articles they write, anyone wishing to connect with a particular journalist can send a mailbox address (and perhaps some introductory text) encrypted under that journalist's public key with no other identifying information. A client run by a journalist can download all Riposte messages sent in a day and identify those encrypted under that journalist's public key. The journalist can then register any mailbox addresses sent to it and communicate with whoever sent the messages via \name.} This requires mailbox owners \rev{(in this case, the journalist)} to choose virtual addresses instead of the servers, but the probability of colliding addresses is low because the virtual address space is large. \rev{Using this approach to dialing gives \name users the ability to bootstrap from a single message in a dialing system with fixed-duration rounds to as many messages as they want in a system which processes messages asynchronously.}

\rev{Since Riposte has a similar underlying architecture to \name, a number of the techniques used in \name could be used to make it a more effective dialing protocol. Most importantly, instead of using Riposte's DPFs and auditing protocol, which are less efficient and require a third non-colluding server, the dialing protocol can use a Riposte/\name hybrid approach where the DPF and auditing protocol are those of \name. This means that the dialing protocol relies on the same trust assumptions as the main protocol, and it can even be deployed on the same servers.} 

\rev{Integrity in the dialing protocol can be ensured in a way similar to the main protocol as well. Instead of sending only a mailbox address, clients send a secret from which a mailbox address and MAC key can be derived, and the encrypted message is then MACed using that key. To ensure that servers can't tamper with or erase messages by changing their state after seeing that of the other server, they are required to publish and send each other commitments to (hashes of) the message shares they hold before publishing the actual databases of messages.}

\section{Implementation and Evaluation}\label{eval}

\begin{figure*}
\centering
\minipage{0.32\textwidth}
\begin{tikzpicture}
\begin{loglogaxis}[
    title style={align=center}, title={Server Communication},
    width=6cm, height=4.5cm,
    xlabel={Number of Mailboxes},
	y label style={at={(axis description cs:.1,.5)},anchor=south},
	ylabel={Communication [KB]},
    cycle list name=auto,
    ymin=.1, ymax=90000,
    xtick={100,1000,10000,100000,1000000},
     tick pos=left,
     minor tick style={draw=none},
    %ytick={0,10,20,30,40,50,60,70},
    %legend pos=south east,
    ymajorgrids=true,
    grid style=dashed,
    	legend style={at={(1.05,-.35)}},
		legend entries={\small Pung, \small Riposte, \small \name},
	legend columns=3,
    ]
    \addplot[color=blue, mark=*]%pung
    coordinates {
    (64,215.3)(256,319.77)(1024,541.0)(4096,868.6)(16384,1671.4)(65536,3850.5)(262144,11354.4)(1048576,38617.4)
    };
    \addplot[color=red, mark=square*]%riposte server
    coordinates {
    (64,18.397)(256,31.358)(1024,56.433)(4096,106.503)(16384,208.241)(65536,409.437)(262144,815.596)(1048576,1624.3)
    };
    \addplot[color=brown, mark=diamond*]%us
    coordinates {
    (64,8.338)(256,8.338)(1024,8.338)(4096,8.338)(16384,8.338)(65536,8.338)(262144,8.338)(1048576,8.338)
    };
    %\legend{\scriptsize Sophos, \scriptsize \name}
\end{loglogaxis}
\end{tikzpicture}
\caption{Server communication costs when sending 160 Byte messages, including both data sent and received. Riposte also requires an auditing server whose costs are not depicted.}
\label{servercommunicationgraph}

\endminipage\hfill
\minipage{0.32\textwidth}
\begin{tikzpicture}
\begin{loglogaxis}[
    title style={align=center}, title={\\Client Communication},
    width=6cm, height=4.5cm,
    xlabel={Number of Mailboxes},
	y label style={at={(axis description cs:.1,.5)},anchor=south},
	ylabel={Communication [KB]},
    cycle list name=auto,
    ymin=.1, ymax=90000,
    xtick={100,1000,10000,100000,1000000},
    tick pos=left,
    minor tick style={draw=none},
    legend pos=south east,
    ymajorgrids=true,
    grid style=dashed,
    	legend style={at={(1.05,-.55)}},
		legend entries={\small Pung, \small Riposte, \small \name},
	legend columns=3,
    ]
    \addplot[color=blue, mark=*]
    coordinates {%pung
    (64,215.3)(256,319.77)(1024,541.0)(4096,868.6)(16384,1671.4)(65536,3850.5)(262144,11354.4)(1048576,38617.4)
    };
    \addplot[color=red, mark=square*]
    coordinates {%Riposte
    (64,5.246)(256,9.614)(1024,18.045)(4096,34.871)(16384,69.080)(65536,137.017)(262144,273.778)(1048576,545.881)
    };
    \addplot[color=brown, mark=diamond*]
    coordinates {
    (64,5.393)(256,5.393)(1024,5.393)(4096,5.393)(16384,5.393)(65536,5.393)(262144,5.393)(1048576,5.393)
    };
    %\legend{\scriptsize Sophos, \scriptsize \name}
\end{loglogaxis}
\end{tikzpicture}
\caption{Client communication costs when sending 160 Byte messages, including both data sent and received. \name requires significantly less communication than prior work. }
\label{clientcommunicationgraph}
\endminipage\hfill
\minipage{0.32\textwidth}
\begin{tikzpicture}
\begin{loglogaxis}[
    title style={align=center}, title={Audit Computation},
    width=\linewidth, height=3.5cm,
    xlabel={Number of Mailboxes},
	y label style={at={(axis description cs:.1,.5)},anchor=south},
	ylabel={Time [ms]},
    cycle list name=auto,
    ymin=.0001, %ymax=9500,
    xtick={100,1000,10000,100000,1000000},
    tick pos=left,
    minor tick style={draw=none},
    legend pos=south east,
    legend style={at={(1.05,-1.25)}},
	legend entries={\small Riposte Server, \small \name Server, \small Riposte Client, \small \name Client, \small Riposte Auditor},
	legend columns=2,
    ymajorgrids=true,
    grid style=dashed,
    ]
    \addplot[color=red, mark=square*]%riposte server
    coordinates {
    (1000,.0554)(10000,.4958)(100000,5.0234)(1000000,51.5278)
    };
    \addplot[color=brown, mark=diamond*]%our server
    coordinates {
    (1000,.1628)(10000,1.3734)(100000,13.5726)(1000000,135.626)
    };
    \addplot[color=red, dashed, mark=square]%riposte client
    coordinates {
    (1000,.193)(10000,1.569)(100000,15.4236)(1000000,153.9422)
    };
    \addplot[color=brown, dashed, mark=diamond*]%our client
    coordinates {
    (1000,.005)(10000,.002)(100000,.003)(1000000,.003)
    };
    \addplot[color=red, dotted, mark=square] %riposte auditor
    coordinates {
    (1000,.1032)(10000,.9592)(100000,9.6256)(1000000,96.3496)
    };
    %\legend{\scriptsize Sophos, \scriptsize \name}
\end{loglogaxis}
\end{tikzpicture}

\caption{Our auditing protocol dramatically reduces computation costs for the client while server-side costs remain comparable to prior work, where audit computation time is dwarfed by DPF evaluation anyway.}
\label{auditgraph}
\endminipage
\end{figure*}

%this paragraph is taken from the implementation section that was cut to save space
We implement \name with the underlying cryptographic operations (DPFs, auditing) in C and the higher level functionality (servers, client) in Go. We use OpenSSL for cryptographic operations in C and base our DPF implementation in part on libdpf~\cite{libdpf}, which is in turn based on libfss~\cite{splinter,libfss}. We also re-implemented the client-side computations involved in sending a write request in JavaScript for the whistleblowing application, using the SJCL~\cite{sjcl,sjclpaper} and TweetNaCl.js~\cite{nacljs} libraries for crypto operations. 
We implement the DPF construction~\cite{dpfs2} and the auditing protocol using the field $\F_p$ of integers modulo the prime $p = 2^{128}-159$, since
these field elements have a convenient representation in two 64-bit words. 
Our implementation does not include the client-side integrity checks described in Section~\ref{sec:full}, but these checks can be added by clients with no impact on server-side code or performance. 

We evaluate \name on three Google Cloud instances (two running the servers and a third to simulate clients) with 16-core intel Xeon processors (Haswell or later) with 64GB of RAM each and 15.6~Gbps bandwidth. We run all three in the same datacenter to minimize network latency and focus comparisons to other systems on computational costs since we begin our evaluation by considering communication separately. We evaluate the JavaScript implementation of the whistleblowing client on a laptop with an Intel i5-2540M CPU @ 2.60GHz and 4GB of RAM running Arch Linux and the Chromium web browser. % version 72.0.3626.121 with a 12.2~Mbps connection to the \name servers. 
All experiments use security parameter $\lambda=128$. 

We compare \name to Riposte~\cite{riposte} and Pung~\cite{pung}, two prior works that also provide cryptographic metadata-hiding guarantees, albeit in slightly different settings. We choose to compare to these systems because, like \name, they also provide cryptographic security guarantees and only rely on a small number of servers to provide their security guarantees. Riposte requires 3 servers, of which two must be honest (a stronger trust assumption than \name) whereas Pung requires only a single server which can potentially be malicious (a weaker trust assumption).  We rerun the original implementations of Riposte and Pung on the same cloud instances used to evaluate \name. \rev{Our evaluation results do not distinguish between real and dummy messages because the two are identical from a performance perspective.}

We find that \name reduces communication costs by orders of magnitude compared to Riposte and Pung, with clients using over $100\times$ less bandwidth than Riposte and over $4000\times$ less bandwidth than Pung when sending a message in the presence of one million registered mailboxes. On the client implemented in C/Go, \name requires 20ms of computation to send a write request, even in the presence of one million registered mailboxes, and our JavaScript client performs similarly, requiring 51ms for the same task. 

We compare the performance of our auditing protocol to the prior protocol proposed by Riposte~\cite{riposte}. Despite making a weaker trust assumption and requiring only two servers, our protocol reduces client computation time by several orders of magnitude, resulting in audit compute time of under 5~\emph{micro}seconds regardless of the number of registered mailboxes and reducing overall client compute costs by $8\times$ compared to an implementation that uses Riposte's auditing protocol.

On the server side, we show that \name's throughput and latency costs are better than prior work. We also calculate the dollar cost of running each system to send one million messages and find that \name costs $6\times$ less to operate than Riposte, the second cheapest system. Throughout our experiments we generally compare to prior work on message sizes comparable to or larger than those used in their original evaluations. Since the recent whistleblower's report to the US intelligence community's inspector general contained 25.3KB of text~\cite{whistleblowingdoc} and last year's widely reported anonymous op-ed in the New York Times contained about 9KB of text~\cite{oped}, we make sure to evaluate \name on 32KB messages as well. 

\subsection{Communication Costs} 
Figures~\ref{servercommunicationgraph} and~\ref{clientcommunicationgraph} show communication costs for each party when sending a 160 Byte message and compares to costs in Riposte~\cite{riposte} and Pung~\cite{pung}. We use a smaller message size than in our subsequent experiments to focus on measuring the role of the DPF and auditing in communication costs. Communication costs always increase linearly with the size of the messages being sent. \name's communication costs are \emph{constant} regardless of the number of mailboxes, compared to asymptotically $\sqrt{n}$ in Riposte, the system with the next lowest costs. For $2^{14}$ mailboxes, \name has 8.34KB of communication by the server and 5.39KB by the client for each write. The corresponding costs in Riposte are 208KB and 69KB, respectively, representing communication reductions of $25\times$ on the server side and $13\times$ on the client. Riposte additionally requires a third auditing server which incurs 13.8KB of communication, whereas \name has no such requirement.
For about one million ($2^{20}$) mailboxes, \name requires $101\times$ less communication than Riposte on the client side and $195\times$ less on the server side. The communication reduction compared to Pung in this setting is $4,631\times$ on the server side and $7,161\times$ on the client side, reflecting the high cost of providing security with only one server as Pung does. 
Our communication savings come from using log-sized DPFs that write into a large but fixed-size virtual address space for write requests and from our new auditing protocol whose communication costs do not increase with the number of mailboxes.

\subsection{Client Costs}\label{clienteval}

Client computation time in both our native C/Go and in-browser Javascript implementations remains constant as the number of mailboxes on the server side increases: since the client always prepares a DPF to be run on the $2^{128}$-sized virtual address space, the cost of preparing the DPF does not grow with the number of mailboxes, and the client-side auditing cost is constant as well. To send a 1KB message, our client takes 20ms in C/Go and 51ms in Javascript.  
 Combined with the low client communication costs in Figures~\ref{servercommunicationgraph} and~\ref{clientcommunicationgraph}, this shows that an \name client can easily be deployed as background Javascript in a web page to create cover traffic, as explained in Section~\ref{covertraffic}.

To further explore performance implications of an \name client being embedded on a major news site, we measured the page load times of the New York Times, Washington Post, and Wall Street Journal websites. On average, these pages took 5.4, 3.4, and 2.2 seconds to load completely (over a 50MBit/sec connection), so the computation costs of our client in the browser are less than $3\%$ of current page load times and can occur in the background without impacting user experience. We also measured the sizes of the three websites (without caching) at 4.9MB, 9.1MB, and 8.2MB, respectively. Our JavaScript implementation with dependent libraries takes 72.5KB of space, so adding our code would increase a site's size by less than~$1.5\%$. 

\noindent\textbf{Auditing}. In addition to enabling improved communication efficiency, as seen above, our auditing protocol dramatically reduces computation costs for the client. Figure~\ref{auditgraph} shows the computation costs of our auditing protocol as compared to the protocol used in Riposte~\cite{riposte}, which we re-implemented for the purpose of this experiment. Unlike Riposte, where client and server computation costs for auditing are comparable, our protocol runs in $O(1)$ time on the client, taking less than \emph{5 microseconds} regardless of how many mailboxes are registered on the servers. This is about $55,000\times$ less than the client computation cost for auditing in Riposte for one million mailboxes and translates to overall client computation on our system running $8\times$ faster than it would if it were using the Riposte auditing protocol. In addition to the asymptotic improvement, our protocol uses only hardware-accelerated AES evaluations, whereas Riposte's auditing protocol involves a mix of AES evaluations and more costly SHA256 hashes. 

Our auditing protocol's performance is comparable to Riposte on the server side, but it does not require a third auditing server as Riposte does. The performance bottleneck on the servers is DPF evaluations, not auditing, so server side performance improvements in auditing would only result in negligible improvements in end-to-end performance. 
 As we will see, \name outperforms Riposte's overall throughput despite not significantly changing server side auditing costs. 
 
\begin{figure*}
\centering
\minipage{0.32\textwidth}
\begin{tikzpicture}
\begin{loglogaxis}[
    title style={align=center}, title={Message Delivery Latency},
    width=\linewidth, height=4.5cm,
    xlabel={Number of Mailboxes},
	y label style={at={(axis description cs:.1,.5)},anchor=south},
	ylabel={Time [seconds]},
    cycle list name=auto,
    ymin=0, %ymax=35,
    xtick={100,1000,10000,100000,1000000},
    tick pos=left,
    minor tick style={draw=none},
    legend pos=south east,
    ymajorgrids=true,
    grid style=dashed,
    	legend style={at={(1.05,-.85)}},
    legend entries={\small \name (1KB), \small Pung (1KB),\small \name (10KB), \small Pung (10KB), \small \name (32KB)},
	legend columns=2,
    ]
    \addplot[color=brown, mark=diamond*]%1kb
    coordinates {
    (100,.0600)(1000,.0737)(10000,.210)(100000,1.54)(500000,7.61)(1000000,15.1)
    };
    %\addplot%4kb
    %coordinates {
    %(100,.0606)(1000,.0917)(10000,.362)(100000,3.07)(500000,16.2)
    %};
    \addplot[color=blue, mark=*]%Pung, 1KB
    coordinates {
    (100,.158)(1000,.188)(10000,.336)(100000,2)(1000000,19.3)
    };
    \addplot[color=brown, dashed, mark=diamond*]%10kb
    coordinates {
    (100,.0648)(1000,.122)(10000,.621)(100000,5.856)
    };
    %\addplot[color=brown, dashed, mark=diamond*]%16kb
    %coordinates {
    %(100,.0678)(1000,.151)(10000,.921)(100000,8.83)
    %};
    \addplot[color=blue, dashed, mark=*]%Pung, 10KB
    coordinates {
    (100,.128)(1000,.274)(10000,1.83)(100000,16.64)%(1000000,188.17)
    };
    \addplot[color=brown, dotted, mark=diamond*]%32kb
    coordinates {
    (100,.0772)(1000,.232)(10000,1.93)(100000,14.2)
    };
    %\addplot%64kb
    %coordinates {
    %(100,.0881)(1000,.332)(10000,2.678)(100000,27.3)
    %};

    %\legend{\scriptsize Sophos, \scriptsize \name}
\end{loglogaxis}
\end{tikzpicture}
\caption{Message delivery latency in \name and Pung for various message sizes. \name outperforms Pung by $1.3-2.6\times$ for 1KB messages and by $2.0-2.9\times$ for 10KB messages. Pung's performance for 10KB messages is comparable to \name's performance for 32KB messages.}
\label{fig:latency} 
\endminipage\hfill
\minipage{0.32\textwidth}
\begin{tikzpicture}
\begin{semilogxaxis}[
    title style={align=center}, title={Message Throughput},
    width=\linewidth, height=4.5cm,
    xlabel={Number of Mailboxes},
	y label style={at={(axis description cs:.1,.5)},anchor=south},
	ylabel={Throughput [Msgs/sec]},
    cycle list name=auto,
    %xmin=1, xmax=1100000,
    %ymin=0, %ymax=35,
    xtick={100,1000,10000,100000,1000000},
    tick pos=left,
    legend pos=south east,
    ymajorgrids=true,
    minor tick style={draw=none},
    grid style=dashed,
    	legend style={at={(.95,-.85)}},
		legend entries={\small Riposte with 1KB Messages, \small \name with 1KB Messages, \small \name with 32KB Messages},
	%anchor=north,legend columns=2},
    ]
    \addplot[color=red, mark=square*]%riposte 1kb
    coordinates {
    (1000,38)(10000,10)(25000,5)(50000,3.4)(75000,2.5)(100000,2)(200000,1.2)(300000,.9)(400000,.7)(500000,.6)
    };
    \addplot[color=brown, mark=diamond*]%us 1kb
    coordinates {
    (1000,53.54)(10000,49.35)(25000,31.41)(50000,15.81)(75000,10.6)(100000,7.87)(200000,4.05)(300000,2.81)(400000,1.92)(500000,1.57)
    };
    \addplot[color=brown, dashed, mark=diamond*]%us 32kb
    coordinates {
    (1000,51.49)(10000,10.6)(25000,4.41)(50000,2.24)
    };
    %\legend{\scriptsize Sophos, \scriptsize \name}
    
        \node[] at (axis cs: 3800,.9) {\small(Higher is better)};
\end{semilogxaxis}
\end{tikzpicture}
\caption{\name's throughput is $1.4$-$6.3\times$ that of Riposte for 1KB messages. Even with 32KB messages, \name's throughput is still comparable to Riposte on 1KB messages. For large numbers of mailboxes, both systems are computation-bound by the number of DPF evaluations required to process writes.}
\label{riposte}
\endminipage\hfill
\minipage{0.32\textwidth}
\begin{tikzpicture}
\begin{semilogxaxis}[
    title style={align=center}, title={\\System Cost per 1M Messages},
    width=6cm, height=4.5cm,
    xlabel={Number of Mailboxes},
	y label style={at={(axis description cs:.1,.5)},anchor=south},
	ylabel={Cost [\$]},
    cycle list name=auto,
    ymin=.1, ymax=1100,
    xtick={100,1000,10000,100000,1000000},
    tick pos=left,
    minor tick style={draw=none},
    legend pos=south east,
    ymajorgrids=true,
    grid style=dashed,
    	legend style={at={(1.05,-.55)}},
		legend entries={\small Pung, \small Riposte, \small \name},
	legend columns=3, 
    ]
    \addplot[color=blue, mark=*]
    coordinates {%pung - data cost only
    (64,14.01)(256,14.75)(1024,17.70)(4096,67.02)(16384,106.18)(65536,318.19)(262144,1028.27)%(1048576,3784.76) excluding last point so iti all fits on graph
    };
    %\addplot
    %coordinates {%Riposte - data cost only
    %(64,.1)(256,.3)(1024,.9)(4096,.27)(16384,.49)(65536,1.01)(262144,2.08)(1048576,4.21)
    %};
    \addplot[color=red, mark=square*]%riposte hosting 24hr is $37.248
    coordinates {%Riposte - hosting cost only for 1M messages
    (1000,11.35)(10000,23.15)(25000,86.3)(50000,126.80)(75000,172.6)(100000,215.73)(200000,359.29)(300000,478.95)(400000,615.83)(500000,718.51)
    };
    %\addplot
    %coordinates {%Riposte - data cost + hosting for 24 hours
    %(64,37.35)(256,37.37)(1024,37.44)(4096,37.62)(16384,37.84)(65536,38.36)(262144,39.43)(1048576,41.56)
    %};
    \addplot[color=brown, mark=diamond*]
    coordinates {%Express - hosting + data for 1M messages (data is always just 5 cents)
    (1000,5.39)(10000,5.84)(25000,9.14)(50000,18.12)(75000,27.00)(100000,36.35)(200000,70.59)(300000,101.70)(400000,148.82)(500000,181.99)
    };
    %\addplot
    %coordinates {%Express - data cost + hosting for 24 hours
    %(64,24.73)(256,24.73)(1024,24.73)(4096,24.73)(16384,24.73)(65536,24.73)(262144,24.73)%(1048576,24.73)
    %};
    %\legend{\scriptsize Sophos, \scriptsize \name}
\end{semilogxaxis} 
\end{tikzpicture}
\caption{Dollar costs to run end-to-end metadata hiding systems with cryptographic security guarantees. Prices are based on Google Cloud Platform public pricing information for compute instances and data egress. Processing one million messages in \name in the presence of 100,000 registered mailboxes costs $5.9\times$ less than the next cheapest system. }
\label{fig:costgraph}
\endminipage
\end{figure*}
 
\subsection{Server Performance}

We now measure the performance of \name on the server-side. We measure the total throughput of the system, the latency between when a client sends a message and when the mailbox owner can read it, and the cost in dollars of running \name. 

\noindent\textbf{Throughput}. We compare \name's throughput to Riposte~\cite{riposte}. Figure~\ref{riposte} shows the comparison between \name and Riposte for 1KB messages, where throughput is measured as the number of writes the servers can process per unit time. \name's throughput is $1.4$-$6.3\times$ that of Riposte in our experiments, and \name's throughput when handling 32KB messages is comparable to Riposte when handling only 1KB messages for up to about 50,000 mailboxes. Both systems are ultimately computation-bound by the number of DPF evaluations required to process writes. The graph shows the high throughput of each system \rev{drops} significantly as they shift from being communication-bound to being computation-bound by DPF evaluations for increasingly large numbers of mailboxes. 

Like \name, Riposte uses DPFs to write messages across two servers. Unlike \name, Riposte requires a third party to audit user messages and must run its protocol in rounds to provide anonymity guarantees to its users. The rounds are necessary for Riposte's anonymous broadcast setting because all messages are public, so if messages were revealed after each write, the author of a message would clearly be whoever connected to the system last. In contrast, \name messages can be delivered immediately without waiting for a round to end.

Another difference between \name and Riposte is that Riposte relies on a probabilistic approach based on hashing for users to decide where to write with their DPF queries. This means that there is a chance messages will collide when written to the same address, rendering all colliding messages unreadable. We evaluated Riposte with parameters set to allow a failure rate of $5\%$, meaning that 1 in 20 messages would be corrupted by a collision and not delivered, even after Riposte's collision-recovery procedure. \name's virtual address system avoids this issue because the space of virtual addresses is so large that collisions would only occur with negligible probability. 

\noindent\textbf{Latency}. Since \name does not require any synchronization between clients and the \name servers, the latency of a write request consists only of the time for the servers to process the request and for the mailbox owner to read the message. Figure~\ref{fig:latency} shows how latency for processing a single write request scales as the number of mailboxes increases for various mailbox sizes. After about 10,000 mailboxes, or even 1,000 mailboxes for larger message sizes, message processing becomes bound by the latency of computing AES for each DPF evaluation, so total latency increases linearly with the number of DPFs that must be evaluated (one per mailbox). 

In prior metadata-hiding communication systems, message delivery latency depends on a deployment-specified round duration. As such, it is difficult to directly compare latency in \name to prior work. We can, however, compare to the computation time on the servers to process one message and deliver it to its recipient. For example, Riposte's ``latency'' under this metric is simply the time to process a DPF write and then run an audit. A more interesting comparison is to see how \name's server-side costs compare to a different architecture, such as the single-server PIR-based approach of Pung~\cite{pung}. 

Since Pung~\cite{pung} uses fast writes and more expensive reads whereas \name has fast reads but expensive writes, we run both systems with a write followed by a read, as required by Pung's messaging use case. As shown in Figure~\ref{fig:latency}, \name outperforms Pung by $1.3$-$2.6\times$ when run with 100-1,000,000 mailboxes for 1KB messages. When we increase the message size to 10KB, we find that Pung is $2-2.9\times$ slower than \name and closely matches \name's performance on 32KB messages. Note that the comparison to Pung is not quite apples to apples because Pung operates in a stricter single-server security setting.

\noindent\textbf{Total system cost}. Having measured \name's throughput and latency, we now turn to the question of \name's cost in dollars (USD). Our evaluation focuses on the dollar cost of running the infrastructure required for \name in the cloud and excludes human costs such as paying engineers to deploy and maintain the software. The primary non-human costs in running \name, as with any metadata-hiding system, come from running the necessary servers and passing data through them. Using the data from our evaluation thus far, we estimate the price of running \name to send one million messages using public Google Cloud Platform pricing information. We calculate the cost of running the system as the cost of hosting the \name servers for the length of time required to process one million messages plus the data passed between the servers and back to the client (data passing into Google cloud instances from clients outside is free). We price the instances according to costs for various regions in the US and Canada and calculate data charges using the prices for data transfer between regions in the US and Canada (for communication between servers) or with the public internet (for communication with clients).

The results of this estimation process appear in Figure~\ref{fig:costgraph}, where we carry out similar calculations for Pung and Riposte. As depicted in the figure, processing one million messages with \name costs $5.9\times$ less than Riposte, the closest prior work measured, in the presence of 100,000 mailboxes. The high cost of running Pung comes from its communication costs, where data egress charges far outweigh the cost of hosting the system. \rev{The data egress cost of sending one million messages in Pung with 262,144 registered mailboxes exceeds $\$1,000$}. On the other hand, \name and Riposte incur smaller data costs, $\$0.05$ per million messages in \name and $\$4.21$ per million messages in Riposte with one million registered mailboxes. The large gap in cost between \name and Riposte comes from hosting the servers themselves. \name's higher throughput means it can process one million messages more quickly than Riposte, and the fact that it requires only two servers, compared to three in Riposte, means that the cost per hour of running \name is approximately $2/3$ that of running Riposte. \rev{Hosting costs per 24 hours, excluding data costs, are $\$11.75$ for Pung, $\$37.25$ for Riposte, and $\$24.68$ for \name, corresponding to the number of servers each system needs (including cost differences for hosting servers in different regions).}

\noindent\textbf{\rev{Comparison to differential privacy systems.}} \rev{As described in Section~\ref{designapproaches}, systems based on differential privacy (DP) exchange gradual metadata leakage over time for stronger performance. Although this fundamental difference in security properties makes it difficult to do a direct comparison to DP systems such as Vuvuzela~\cite{vuvuzela}, Stadium~\cite{stadium}, and Karaoke~\cite{karaoke}, we will attempt here to roughly compare \name to published performance results for Vuvuzela and Karaoke. Vuvuzela operates with the same distributed trust model as \name, with a small number of servers, whereas Karaoke is designed for use in a setting with many servers. See Section~\ref{designapproaches} for a discussion of these two approaches to distributing trust. }

\rev{One further difference to keep in mind when comparing existing DP systems to \name (as well as the systems we have compared \name to thus far) is that costs in Riposte, Pung, and \name increase in the number of mailboxes registered, while costs in existing DP-based systems increase in the number of \emph{users} registered. This means that a fully connected communication graph on $N$ users would require $N^2$ mailboxes in \name but would not require additional cost in DP systems beyond that of $N$ users and the high volume of traffic required for all of them to talk to each other. Fortunately, in most messaging systems, each user only has a small number of active contacts relative to the total number of users on the platform, so this difference should not cause harm in practice.}

Vuvuzela’s end-to-end latency to deliver a 256 byte message for the lowest security setting on which it was evaluated hovers around 8 seconds for 10,000 users and 20 seconds for one million users~\cite{vuvuzela}. By comparison, Express takes 210ms to write and then read a larger 1KB message when there are 10,000 mailboxes and 15 seconds when there are one million mailboxes. The higher latency in Vuvuzela is due to cover traffic messages sent before a message can be delivered.

\rev{Karaoke operates using a variable number of servers, and its end-to-end latency to deliver a 256 byte message  hovers around 6 seconds for one million users and 100 servers when up to $20\%$ of servers are malicious~\cite{karaoke}. However, Karaoke's latency approximately triples when moving from providing security against $20\%$ malicious servers to $50\%$ malicious servers, which more closely matches the one-out-of-two security provided by \name. Since Karaoke's evaluation was also conducted on more powerful machines than ours, we conclude that latency is roughly comparable between \name and Karaoke.}

%By comparison, \name takes 15 seconds to write and then read a larger 1KB message when there are one million mailboxes. 

\rev{On the other hand,} not requiring cryptographic security allows \rev{DP solutions} to achieve higher throughput than cryptographic systems. As such, \rev{they} can process messages faster and at lower cost than \name. However, in addition to the difference in security guarantees, \rev{they} achieve their low price by pushing the true cost of operating the system onto clients. To send and receive messages, clients must \emph{always} remain online. 

\section{Related Work}\label{related}
  
The most widely used anonymity system in use today is without a doubt Tor~\cite{tor}, which relies on onion routing. SecureDrop~\cite{securedrop,securedropreport} is a widely used Tor-based tool to allow sources to anonymously connect with journalists to give tips. Although our work focuses on hiding metadata and not on preserving anonymity, anonymity systems are often used even when clients only wish to hide metadata. Although a number of works precisely model and analyze the security offered by Tor~\cite{BKM+16,KBS+19,KBS20}, it is unfortunately vulnerable to traffic analysis attacks if a passive adversary controls enough of the  network~\cite{breaktor,HB13,JWJ+13}. A recent impossibility result suggests that this limitation may be necessary for broad classes of anonymity systems~\cite{trilemma}.

\noindent\textbf{Cryptographic security}. \name belongs to a broad family of works which aim to give cryptographic guarantees regarding anonymity and metadata-hiding properties. One category of works in this area include systems based on mix-nets~\cite{verdict,dissent,herbivore,riffle,pynchon,dissentnumbers,loopix} which involve all users in a peer to peer system participating in shuffling messages~\cite{Chaum81,Chaum88}. Later work has added verifiability to this model~\cite{riffle} and outsourced the shuffling to a smaller set of servers~\cite{pynchon,dissentnumbers}. Most recently, mixing techniques have been extended to support large numbers of users in Atom~\cite{atom} and XRD~\cite{xrd}. Systems in this line of work suffer from high latency due to the need to run many shuffles and require participation by a large number of servers run by different operators to achieve security.

An important difference between \name and mixnets relates to tradeoffs in anonymity and latency. 
Since a user's anonymity set is based on the number of messages being shuffled together, a mixnet operator must choose between a high-latency setting with a large anonymity set or a lower latency setting with a smaller anonymity set. For example, if 1,000 messages are sent through the system in one hour, a mixnet that wants an anonymity set size of 1,000 must wait an hour before it can deliver messages, whereas \name can achieve the same anonymity set but deliver messages immediately. A mixnet's anonymity set is restricted to the number of messages included in the mixing, which in turn depends on the desired latency, leading to an inherent tradeoff between anonymity and latency~\cite{trilemma}. Express messages, on the other hand, are in some sense mixed with all the prior messages sent through the system. This means that while a mixnet may have to compromise on anonymity set size to meet a given latency goal, \name does not. 

Another class of cryptographic messaging solutions use private information retrieval techniques~\cite{pir,xpir,dpfs,dpfs2,OS97,ACLS18} to render reads or writes into a database of mailboxes private and target a variety of use cases~\cite{riposte,talek,P3,pung,dp5,CB95,pushpull}. \name falls into this category. Riposte~\cite{riposte} \rev{and, more recently, Blinder~\cite{blinder}}, provide anonymous broadcast mechanisms using DPFs~\cite{dpfs}, and Talek~\cite{talek} offers a private publish-subscribe protocol. P3~\cite{P3} deals with privately retrieving messages with more expressive search queries. Pung~\cite{pung} operates in a single-server setting and therefore requires weaker trust assumptions than \name, but as we show in Section~\ref{eval}, has higher costs than \name as well.

\noindent\textbf{Differential privacy}. 
Another class of works make differential privacy guarantees~\cite{Dwork06} instead of cryptographic guarantees. These systems typically achieve better performance but at the cost of setting a \emph{privacy budget} that dictates how much privacy the system will provide. These works include Vuvuzela~\cite{vuvuzela}, Alpenhorn~\cite{alpenhorn}, Stadium~\cite{stadium}, and Karaoke~\cite{karaoke}.

\section{Conclusion}
 
We have presented \name, a metadata-hiding communication system that requires only symmetric key cryptographic primitives while providing near-optimal communication costs. In addition to order of magnitude improvements in communication cost, \name reduces the dollar cost of running a metadata-hiding communication system by $6\times$ compared to prior work. 
%We feel that, in addition to taking a strong step forward in lightweight metadata-hiding communication systems that provide cryptographic security guarantees, \name provides a practical solution for metadata-hiding whistleblowing capable of handling the real-world demands of this problem.  
Our implementation is open source and available online at \url{https://github.com/SabaEskandarian/Express}. 

\section*{Acknowledgments}
  
  We would like to thank Dima Kogan, Alex Ozdemir, the anonymous reviewers, and our shepherd, Esfandiar Mohammadi, for their thoughtful comments. %, as well as the anonymous reviewers and our shepherd, Esfandiar Mohammadi, for their thoughtful feedback on earlier versions of the paper. 
  
  This research was supported in part by affiliate members and other supporters of the Stanford DAWN project---Ant Financial, Facebook, Google, Infosys, NEC, and VMware---as well the NSF under CAREER grant CNS-1651570. The work was additionally funded by NSF, DARPA, a grant from ONR, and the Simons Foundation. Any opinions, findings, and conclusions or recommendations expressed in this material are those of the authors and do not necessarily reflect the views of DARPA or the National Science Foundation.
  
%\clearpage
{\small
\setstretch{0.95}
\bibliography{main}

\begin{thebibliography}{10}

\bibitem{nacljs}
Tweetnacl.js.
\newblock \url{https://github.com/dchest/tweetnacl-js}.

\bibitem{blinder}
Ittai Abraham, Benny Pinkas, and Avishay Yanai.
\newblock Blinder: Mpc based scalable and robust anonymous committed broadcast.
\newblock Cryptology ePrint Archive, Report 2020/248, 2020.

\bibitem{ACLS18}
Sebastian Angel, Hao Chen, Kim Laine, and Srinath T.~V. Setty.
\newblock {PIR} with compressed queries and amortized query processing.
\newblock In {\em {IEEE} Symposium on Security and Privacy, {SP}}, 2018.

\bibitem{pung}
Sebastian Angel and Srinath T.~V. Setty.
\newblock Unobservable communication over fully untrusted infrastructure.
\newblock In {\em {OSDI}}, 2016.

\bibitem{oped}
Anonymous.
\newblock I am part of the resistance inside the trump administration.
\newblock
  \url{https://www.nytimes.com/2018/09/05/opinion/trump-white-house-anonymous-resistance.html},
  2018.

\bibitem{whistleblowingdoc}
Anonymous.
\newblock Whistleblower complaint to us intelligence community inspector
  general.
\newblock
  \url{https://www.documentcloud.org/documents/6430351-Whistleblower-Complaint.html},
  2019.

\bibitem{article3}
AP.
\newblock Gov't obtains wide ap phone records in probe.
\newblock {\em Associated Press}, 2013.

\bibitem{article4}
AP.
\newblock Times says justice seized reporter’s email, phone records.
\newblock {\em Associated Press}, 2018.

\bibitem{BKM+16}
Michael Backes, Aniket Kate, Praveen Manoharan, Sebastian Meiser, and Esfandiar
  Mohammadi.
\newblock Anoa: {A} framework for analyzing anonymous communication protocols.
\newblock {\em J. Priv. Confidentiality}, 2016.

\bibitem{securedropreport}
Charles Berret.
\newblock Guide to securedrop.
\newblock \url{https://www.cjr.org/tow_center_reports/guide_to_securedrop.php},
  2016.

\bibitem{BBC+19}
Dan Boneh, Elette Boyle, Henry Corrigan{-}Gibbs, Niv Gilboa, and Yuval Ishai.
\newblock Zero-knowledge proofs on secret-shared data via fully linear pcps.
\newblock In {\em {CRYPTO}}, 2019.

\bibitem{cryptobook}
Dan Boneh and Victor Shoup.
\newblock {\em A Graduate Course in Applied Cryptography (version~{0.5},
  Chapter 9)}.
\newblock 2017.
\newblock \url{https://cryptobook.us}.

\bibitem{dp5}
Nikita Borisov, George Danezis, and Ian Goldberg.
\newblock {DP5:} {A} private presence service.
\newblock {\em PoPETs}, 2015(2):4--24, 2015.

\bibitem{fss}
Elette Boyle, Niv Gilboa, and Yuval Ishai.
\newblock Function secret sharing.
\newblock In {\em {EUROCRYPT}}, 2015.

\bibitem{dpfs2}
Elette Boyle, Niv Gilboa, and Yuval Ishai.
\newblock Function secret sharing: Improvements and extensions.
\newblock In {\em {ACM CCS}}, 2016.

\bibitem{Chaum81}
David Chaum.
\newblock Untraceable electronic mail, return addresses, and digital
  pseudonyms.
\newblock {\em Commun. {ACM}}, 24(2):84--88, 1981.

\bibitem{Chaum88}
David Chaum.
\newblock The dining cryptographers problem: Unconditional sender and recipient
  untraceability.
\newblock {\em J. Cryptology}, 1(1):65--75, 1988.

\bibitem{libdpf}
Weikeng Chen.
\newblock libdpf.
\newblock \url{https://github.com/weikengchen/libdpf}, 2018.

\bibitem{talek}
Raymond Cheng, Will Scott, Bryan Parno, Irene Zhang, Arvind Krishnamurthy, and
  Thomas Anderson.
\newblock {Talek: a Private Publish-Subscribe Protocol}.
\newblock Technical Report UW-CSE-16-11-01, University of Washington Computer
  Science and Engineering, Seattle, Washington, Nov 2016.

\bibitem{pir}
Benny Chor, Eyal Kushilevitz, Oded Goldreich, and Madhu Sudan.
\newblock Private information retrieval.
\newblock {\em J. {ACM}}, 45(6):965--981, 1998.

\bibitem{metadata}
David Cole.
\newblock We kill people based on metadata.
\newblock {\em New York Review of Books}, 2014.

\bibitem{CB95}
David~A. Cooper and Kenneth~P. Birman.
\newblock Preserving privacy in a network of mobile computers.
\newblock In {\em {IEEE} Symposium on Security and Privacy, {SP}}, 1995.

\bibitem{prio}
Henry Corrigan{-}Gibbs and Dan Boneh.
\newblock Prio: Private, robust, and scalable computation of aggregate
  statistics.
\newblock In {\em {NSDI}}, 2017.

\bibitem{riposte}
Henry Corrigan{-}Gibbs, Dan Boneh, and David Mazi{\`{e}}res.
\newblock Riposte: An anonymous messaging system handling millions of users.
\newblock In {\em {IEEE} Symposium on Security and Privacy, {SP}}, 2015.

\bibitem{dissent}
Henry Corrigan{-}Gibbs and Bryan Ford.
\newblock Dissent: accountable anonymous group messaging.
\newblock In {\em {ACM CCS}}, 2010.

\bibitem{conscript}
Henry Corrigan{-}Gibbs and Bryan Ford.
\newblock Conscript your friends into larger anonymity sets with javascript.
\newblock In {\em Proceedings of the 12th annual {ACM} Workshop on Privacy in
  the Electronic Society, {WPES} 2013, Berlin, Germany, November 4, 2013},
  pages 243--248, 2013.

\bibitem{verdict}
Henry Corrigan{-}Gibbs, David~Isaac Wolinsky, and Bryan Ford.
\newblock Proactively accountable anonymous messaging in verdict.
\newblock In {\em {USENIX} Security}, 2013.

\bibitem{article1}
Cora Currier.
\newblock Planned nsa reforms still leave journalists reason to worry.
\newblock {\em Columbia Journalism Review}, 2014.

\bibitem{trilemma}
Debajyoti Das, Sebastian Meiser, Esfandiar Mohammadi, and Aniket Kate.
\newblock Anonymity trilemma: Strong anonymity, low bandwidth overhead, low
  latency - choose two.
\newblock In {\em {IEEE} Symposium on Security and Privacy, {SP}}, 2018.

\bibitem{breaktor}
Roger Dingledine.
\newblock One cell is enough to break tor's anonymity, 2009.

\bibitem{tor}
Roger Dingledine, Nick Mathewson, and Paul~F. Syverson.
\newblock Tor: The second-generation onion router.
\newblock In {\em {USENIX} Security Symposium}, 2004.

\bibitem{Dwork06}
Cynthia Dwork.
\newblock Differential privacy.
\newblock In {\em {ICALP}}, 2006.

\bibitem{dpfs}
Niv Gilboa and Yuval Ishai.
\newblock Distributed point functions and their applications.
\newblock In {\em {EUROCRYPT}}, 2014.

\bibitem{herbivore}
Sharad Goel, Mark Robson, Milo Polte, and Emin~Gun Sirer.
\newblock Herbivore: A scalable and efficient protocol for anonymous
  communication.
\newblock Technical report, Cornell University, 2003.

\bibitem{GGM84}
Oded Goldreich, Shafi Goldwasser, and Silvio Micali.
\newblock On the cryptographic applications of random functions.
\newblock In {\em {CRYPTO}}, 1984.

\bibitem{article2}
Glenn Greenwald.
\newblock Nsa collecting phone records of millions of verizon customers daily.
\newblock {\em The Guardian}, 2013.

\bibitem{HB13}
Amir Houmansadr and Nikita Borisov.
\newblock The need for flow fingerprints to link correlated network flows.
\newblock In {\em {PETS}}, 2013.

\bibitem{JWJ+13}
Aaron Johnson, Chris Wacek, Rob Jansen, Micah Sherr, and Paul~F. Syverson.
\newblock Users get routed: traffic correlation on tor by realistic
  adversaries.
\newblock In {\em {ACM CCS}}, 2013.

\bibitem{P3}
Lea Kissner, Alina Oprea, Michael~K. Reiter, Dawn~Xiaodong Song, and Ke~Yang.
\newblock Private keyword-based push and pull with applications to anonymous
  communication.
\newblock In {\em {ACNS}}, 2004.

\bibitem{pushpull}
Lea Kissner, Alina Oprea, Michael~K. Reiter, Dawn~Xiaodong Song, and Ke~Yang.
\newblock Private keyword-based push and pull with applications to anonymous
  communication.
\newblock In {\em {ACNS}}, 2004.

\bibitem{KBS+19}
Christiane Kuhn, Martin Beck, Stefan Schiffner, Eduard~A. Jorswieck, and
  Thorsten Strufe.
\newblock On privacy notions in anonymous communication.
\newblock {\em PoPETs}, 2019.

\bibitem{KBS20}
Christiane Kuhn, Martin Beck, and Thorsten Strufe.
\newblock Breaking and (partially) fixing provably secure onion routing.
\newblock In {\em {IEEE} Symposium on Security and Privacy, {SP}}, 2020.

\bibitem{atom}
Albert Kwon, Henry Corrigan{-}Gibbs, Srinivas Devadas, and Bryan Ford.
\newblock Atom: Horizontally scaling strong anonymity.
\newblock In {\em {SOSP}}, 2017.

\bibitem{riffle}
Albert Kwon, David Lazar, Srinivas Devadas, and Bryan Ford.
\newblock Riffle: An efficient communication system with strong anonymity.
\newblock {\em PoPETs}, 2016(2):115--134, 2016.

\bibitem{xrd}
Albert Kwon, David Lu, and Srinivas Devadas.
\newblock {XRD:} scalable messaging system with cryptographic privacy.
\newblock {\em CoRR}, abs/1901.04368, 2019.

\bibitem{karaoke}
David Lazar, Yossi Gilad, and Nickolai Zeldovich.
\newblock Karaoke: Distributed private messaging immune to passive traffic
  analysis.
\newblock In {\em {OSDI}}, 2018.

\bibitem{alpenhorn}
David Lazar and Nickolai Zeldovich.
\newblock Alpenhorn: Bootstrapping secure communication without leaking
  metadata.
\newblock In {\em {OSDI}}, 2016.

\bibitem{xpir}
Carlos~Aguilar Melchor, Joris Barrier, Laurent Fousse, and Marc{-}Olivier
  Killijian.
\newblock {XPIR} : Private information retrieval for everyone.
\newblock {\em PoPETs}, 2016(2):155--174, 2016.

\bibitem{OS97}
Rafail Ostrovsky and Victor Shoup.
\newblock Private information storage (extended abstract).
\newblock In {\em {STOC}}, 1997.

\bibitem{loopix}
Ania~M. Piotrowska, Jamie Hayes, Tariq Elahi, Sebastian Meiser, and George
  Danezis.
\newblock The loopix anonymity system.
\newblock In {\em {USENIX} Security}, 2017.

\bibitem{unesco}
Julie Posetti.
\newblock {\em Protecting Journalism Sources in the Digital Age}.
\newblock UNESCO, 2017.

\bibitem{pynchon}
Len Sassaman, Bram Cohen, and Nick Mathewson.
\newblock The pynchon gate: a secure method of pseudonymous mail retrieval.
\newblock In {\em Proceedings of the 2005 {ACM} Workshop on Privacy in the
  Electronic Society, {WPES} 2005, Alexandria, VA, USA, November 7, 2005},
  pages 1--9, 2005.

\bibitem{sjcl}
Emily Stark, Michael Hamburg, and Dan Boneh.
\newblock Stanford javascript crypto library.
\newblock \url{https://github.com/bitwiseshiftleft/sjcl}, 2009.

\bibitem{sjclpaper}
Emily Stark, Michael Hamburg, and Dan Boneh.
\newblock Symmetric cryptography in javascript.
\newblock In {\em {ACSAC}}, 2009.

\bibitem{securedrop}
Aaron Swartz.
\newblock Securedrop.
\newblock https://securedrop.org/, 2013.

\bibitem{stadium}
Nirvan Tyagi, Yossi Gilad, Derek Leung, Matei Zaharia, and Nickolai Zeldovich.
\newblock Stadium: {A} distributed metadata-private messaging system.
\newblock In {\em {SOSP}}, 2017.

\bibitem{UN}
{United Nations High Commissioner for Human Rights}.
\newblock The right to privacy in the digital age, 2018.

\bibitem{vuvuzela}
Jelle van~den Hooff, David Lazar, Matei Zaharia, and Nickolai Zeldovich.
\newblock Vuvuzela: scalable private messaging resistant to traffic analysis.
\newblock In {\em {SOSP}}, 2015.

\bibitem{libfss}
Frank Wang, Catherine Yun, Shafi Goldwasser, Vinod Vaikuntanathan, and Matei
  Zaharia.
\newblock libfss.
\newblock \url{https://github.com/frankw2/libfss}, 2017.

\bibitem{splinter}
Frank Wang, Catherine Yun, Shafi Goldwasser, Vinod Vaikuntanathan, and Matei
  Zaharia.
\newblock Splinter: Practical private queries on public data.
\newblock In {\em {NSDI}}, 2017.

\bibitem{dissentnumbers}
David~Isaac Wolinsky, Henry Corrigan{-}Gibbs, Bryan Ford, and Aaron Johnson.
\newblock Dissent in numbers: Making strong anonymity scale.
\newblock In {\em {OSDI}}, 2012.

\end{thebibliography}
\bibliographystyle{plain}
}

\appendix
\section{Security Arguments}
\label{app:security}

This appendix formalizes and proves the soundness and metadata-hiding security properties described in Section~\ref{defs}. %The proofs largely follow from the security of the auditing protocol. 

\noindent\textbf{Soundness}. We formalize soundness as follows. 

\begin{definition}[Soundness]
We define the following soundness game \textsf{SOUND[$\lambda$]} played between an adversary $\mathcal{A}$ and a challenger $\mathcal{C}$ who simulates the behavior of servers $A$ and $B$. Both $\mathcal{A}$ and $\cal{C}$ are given $\lambda$ as input. 

\begin{denseitemize}
\item Setup. Challenger $\mathcal{C}$ creates an initially empty list $I$ of compromised mailbox \rev{indices}. Adversary $\mathcal{A}$ requests creation of a number of mailboxes $N$ of its choosing. There are two ways in which it may create a mailbox:
\begin{denseenumerate}
\item Adversary $\mathcal{A}$ performs the role of a user interacting with the servers to create a new mailbox. Challenger $\mathcal{C}$ adds this mailbox to $I$. 

\item Adversary $\mathcal{A}$ instructs $\mathcal{C}$ to create a mailbox where $\mathcal{C}$ plays the role of both the user and the servers, saving the user's state (and in particular, the mailbox keys) at the end of the registration process. 
\end{denseenumerate}

\item Queries and Corruptions. Adversary $\mathcal{A}$ sends requests to the servers, controlled by $\mathcal{C}$. At any time, it may send $\mathcal{C}$ a mailbox index $i$, at which point $\mathcal{C}$ will send the saved state of the user who registered mailbox $i$ and add $i$ to list $I$. 

\item Output. Challenger $\mathcal{C}$ performs a read on each registered mailbox. If $|I|<N$ and any mailbox outside of the list $I$ contains nonzero contents, the adversary wins the game.
\end{denseitemize}
 
We say a messaging scheme is \emph{sound} if no PPT adversary can win the soundness game above with greater than negligible probability in the security parameter $\lambda$.
\end{definition}

\begin{claim}
The \name scheme is sound. 
\end{claim}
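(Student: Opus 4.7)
The plan is to prove soundness by a two-step hybrid argument reducing to (i)~the soundness of the auditing protocol and (ii)~the unpredictability of freshly sampled $\lambda$-bit virtual addresses. First I would define a hybrid game $G_1$ identical to \textsf{SOUND[$\lambda$]} except that, after reconstructing each submitted write-request vector $w = w_A + w_B \in \F^{n}$, the challenger $\mathcal{C}$ rejects the write whenever $w$ has Hamming weight greater than one. Then $\mathcal{A}$'s winning probabilities in \textsf{SOUND[$\lambda$]} and $G_1$ can differ only when some audit accepts a Hamming-weight-$>1$ vector. By the soundness analysis of the Boyle et al.~\cite{dpfs2} DPF-verification check $c^2 = m\cdot C$ wrapped inside a SNIP~\cite{prio,BBC+19}, each such event occurs with probability at most $O(1/|\F|) + \epsilon_{\mathrm{SNIP}} = \negl(\lambda)$ (the first term is the Schwartz--Zippel bound for the degree-$2$ polynomial identity evaluated at uniformly random $r$; the second is the SNIP soundness error), and a union bound over the polynomially many writes keeps the total statistical distance $\negl(\lambda)$.

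In $G_1$ every accepted write either does nothing or modifies exactly one mailbox, namely the (at most one) mailbox whose virtual address $V_i$ coincides with the unique nonzero index $v^\star \in \{0,1\}^\lambda$ of the adversary's underlying point function. I would call an adversarial request \emph{colliding} if it is a write whose $v^\star$ matches the virtual address of some mailbox currently outside $I$, or a read request $(p,v)$ whose $v$ matches such an address, and then define $G_2$ identical to $G_1$ except that $\mathcal{C}$ aborts on any colliding request. If $G_2$ does not abort, no adversarial action ever touches a mailbox outside $I$, so every such mailbox still decrypts to $0$ at the end of the game and $\mathcal{A}$ cannot win.

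It would then remain to bound $\Pr[G_2 \text{ aborts}]$. The virtual address of each non-adversary-registered mailbox is sampled uniformly from $\{0,1\}^\lambda$ by $\mathcal{C}$ and is used only internally: the mailbox owner, both servers, and the page table are simulated by $\mathcal{C}$, and conditioned on no prior collision, none of the values that reach $\mathcal{A}$ -- the audit seed $r$, accept/reject bits, registration acknowledgments, and read responses (which are simply rejections for any request targeting a mailbox outside $I$) -- depend on these secret addresses. Hence each of $\mathcal{A}$'s $q = \poly(\lambda)$ requests collides with probability at most $N / (2^\lambda - q)$, yielding $\Pr[G_2 \text{ aborts}] \le qN \cdot 2^{-(\lambda - 1)} = \negl(\lambda)$. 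The main obstacle is the first reduction: the auditing-soundness statement has to be instantiated carefully so that the per-query bound holds against an adaptive adversary choosing malformed DPF and SNIP shares, and so that the per-query errors compose into a negligible total over all writes; once that is in place, the virtual-address step is a clean information-theoretic guessing argument.
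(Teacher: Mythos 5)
Your proposal is correct and follows essentially the same two-step argument as the paper: first use the soundness of the auditing protocol to discard writes whose reconstructed vector has Hamming weight greater than one, then argue that a single-mailbox write can only hit a mailbox outside $I$ if the adversary guesses its uniformly random $\lambda$-bit virtual address, which happens with negligible probability. Your version merely makes the paper's informal sketch explicit via hybrids, per-query bounds, and a union bound over the adversary's polynomially many requests.
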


\begin{proof}
The soundness proof follows closely from the soundness of our auditing protocol. 
For each write request sent to the \name servers, we consider two cases: where the write modifies one mailbox and where the write modifies more than one mailbox. If a write modifies more than one mailbox, then it will not be applied to the database of mailboxes, except with negligible probability in $\lambda$, by the soundness property of the auditing protocol. This means that we must only consider writes that modify a single mailbox. The adversary does not know the virtual addresses of mailboxes outside of $I$, but it only wins the soundness game if it produces a DPF that writes to the address of a mailbox outside of $I$. This can only occur with probability $2^{-\lambda}$ (for $\lambda = 128$ in our instantiation of the protocol), which is also negligible. Thus an adversary can only win the soundness game with probability negligible in $\lambda$. 
\end{proof}

\noindent\textbf{Metadata-hiding}. We can formalize the definition of metadata-hiding by requiring that there exists an efficient \emph{simulator} algorithm \textsf{Sim} that, given the list $\ell$ of honest clients who connect with the servers, produces an output which is computationally indistinguishable from the view of an adversary $\mathcal{A}$ who controls any number of users and one server while processing requests from the remaining honest users, subject to the restriction that the recipients of the messages from honest users are never among those controlled by~$\mathcal{A}$. More specifically, $\ell$ should include which client connects, time of connection, and size of message transmitted for each connection made to the compromised server. Given this information, the client can simulate the content of the messages sent by the honest client. 

This definition satisfies our intuitive notion of metadata-hiding because it means that for each message, the server learns nothing about who the message is sent to, as everything it learns could be simulated before it even sees the request. This information would be contained in the content of the honest client's messages, which are not given to the simulator. We sketch a proof of the metadata-hiding security argument below. The proof relies on the zero-knowledge property of the auditing protocol, the privacy of the DPFs used, and the security of the encryption used for access control. 

\begin{claim}[Informal]
There exists an algorithm \textsf{Sim} that, given the list $\ell$ of honest client connections to the \name servers, simulates the view of an adversary $\mathcal{A}$ who controls one \name server and any number of clients, subject to the restriction that the recipients of the honest clients' messages are never among those controlled by~$\mathcal{A}$. 
\end{claim}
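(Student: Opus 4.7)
The plan is to establish the existence of \textsf{Sim} via a sequence of hybrid experiments that progressively remove dependence on the actual recipients of honest clients' messages. Let $\text{Hyb}_0$ denote the real execution of the protocol in which $\mathcal{A}$ controls one server (say $B$) and any subset of the clients. The eventual goal is to define $\text{Hyb}_3$ which depends only on $\ell$ (the list of connection times, client identities, and message sizes), and then set \textsf{Sim} to be the procedure that runs this final hybrid.

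First I would define $\text{Hyb}_1$ by replacing, for each honest client write, the DPF key share $f_B$ sent to the corrupted server with a DPF key share for the zero point function (of the correct output length). Indistinguishability of $\text{Hyb}_0$ and $\text{Hyb}_1$ follows from the key-privacy property of the DPF scheme of Boyle et al.~\cite{dpfs2}, applied once per honest write via a standard reduction. Next, $\text{Hyb}_2$ replaces every SNIP proof $\pi_B$ sent from an honest client to $B$, together with the subsequent inter-server audit conversation involving $B$, with the output of the SNIP zero-knowledge simulator (guaranteed by the zero-knowledge against malicious server property recalled in Section~\ref{sec:tech:audit} and proved for SNIPs in~\cite{prio,BBC+19}). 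Again a hybrid over honest writes gives indistinguishability. Then $\text{Hyb}_3$ replaces the response to any read request targeting a mailbox owned by an honest client (necessarily not a recipient of an adversary-controlled message, by the restriction on $\mathcal{A}$) with a fresh counter-mode encryption of zero under $B$'s stored key; this relies on semantic security of the symmetric encryption scheme together with the re-randomization performed on every read, and on the fact that the adversary never holds the mailbox owner's $k_A$ key for such mailboxes.

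At this point the view of $\mathcal{A}$ in $\text{Hyb}_3$ consists of: (i) DPF shares for the zero function, whose distribution depends only on message length and the public domain size $2^\lambda$; (ii) simulated SNIP transcripts, which by definition are producible from public parameters alone; (iii) fresh encryptions of zero for honest-mailbox reads, producible from $B$'s keys which the simulator holds by virtue of playing the honest server; and (iv) faithful processing of all requests originating from adversary-controlled clients, which the simulator can execute directly. All of these can be produced given only $\ell$, which supplies the connection times, identities, and message sizes needed to schedule and size the simulated honest-client traffic. We therefore define \textsf{Sim} to carry out exactly this procedure, completing the argument by transitivity of computational indistinguishability across $\text{Hyb}_0, \text{Hyb}_1, \text{Hyb}_2, \text{Hyb}_3$.

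The main obstacle I anticipate is the treatment of the audit interaction in $\text{Hyb}_2$: the SNIP zero-knowledge property is stated for a single execution in which the simulator interacts on behalf of the honest server, but here the adversary-controlled $B$ may adaptively interleave audits across many honest write requests, and the audits reference values ($w_A$, $c_A$, $C_A$) that are themselves derived from the DPF shares already modified in $\text{Hyb}_1$. Care is needed to ensure that the SNIP simulator is invoked with consistent server-side inputs and that the hop from $\text{Hyb}_1$ to $\text{Hyb}_2$ accounts for any conditioning introduced by the auditor's acceptance events; I would handle this by fixing an ordering on honest writes and applying a per-write reduction in which the simulator for the current write is given oracle access to whatever adversary state has been produced by earlier (already simulated) hybrids, arguing inductively that no distinguishing advantage accumulates beyond the sum of the per-write DPF, SNIP, and encryption advantages, which is negligible in $\lambda$.
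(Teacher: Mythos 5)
Your proposal is correct and follows essentially the same route as the paper's own proof sketch: simulate honest writes and audits via the zero-knowledge simulator of the auditing protocol (which internally relies on DPF privacy to produce the corrupted server's key share), and answer reads of honest mailboxes with fresh encryptions of zero, justified by semantic security of the counter-mode encryption. Your explicit hybrid decomposition and the caution about interleaved audits simply make rigorous what the paper's sketch bundles into a single invocation of the auditing protocol's simulator, so there is no substantive difference in approach.
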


\begin{proof}[Proof (sketch).]

\textsf{Sim} simulates write requests from honest users and the process of auditing them by invoking the simulator implied by the zero-knowledge property of the auditing protocol. Note that this in turn uses the simulator implied by the definition of DPF privacy to generate DPF shares. Moreover, whenever malicious users request to read the contents of mailboxes, the simulated honest server(s) returns encryptions of zero. 

The proof that this simulator gives the adversary $\mathcal{A}$ a view indistinguishable from interaction with a real honest server and honest users is fairly straightforward. First, since the adversary knows the virtual addresses of honest users' mailboxes, as well as one of the two keys needed to read the contents of those mailboxes (if it has compromised one of the servers), it can send read requests for the contents of honest mailboxes. However, since the adversary does not see the second key to any honest users' mailboxes, we invoke the semantic security of the encryption scheme used to protect honest mailbox contents to show that the messages returned from read requests to an honest server are indistinguishable from encryptions of zero.

From here, just as in the case of soundness, the proof follows from the security of the auditing scheme. From the zero-knowledge property of the auditing scheme, we know that the view of either server in the auditing protocol can be simulated. But the view of each server in \name's auditing protocol is the same as the view of that server in the overall protocol, since the server's view only consists of its shares of the proof input (in the compressed form of a DPF share from which it derives the actual inputs) and the proof messages themselves. 
\end{proof}

\section{SNIPs and Analysis of Auditing Protocol}
\label{app:snips}

This appendix sketches the instantiation of the proofs used in our auditing protocol as well as the analysis of the auditing protocol. Full details and a security proof for this proof system can be found in the Prio paper~\cite{prio}. We include the instantiation of the proof here for completeness, including some improvements described in the follow-up work of Boneh et al.~\cite{BBC+19}. 

The size of a SNIP proof is linear in the number of multiplication gates in the arithmetic circuit representing the statement to be proved. In our case, there are 2 multiplications. The client numbers the gates as $1$ and $2$. The idea of the proof is to create three polynomials $f$, $g$, and $h$ such that $f, g$ represent the left and right inputs of each gate and $h$ to the outputs of each gate. $f$ is the polynomial defined by the points $(0, r_f), (1, c), (2, m)$, and $g$ is the polynomial defined by the points $(0, r_g), (1, c), (2, C)$, where $r_f$ and $r_g$ are random values chosen by the client. Observe that the servers already hold shares of each point used to define $f$ and $g$
except the random values $r_f$ and $r_g$, shares of which must be included in the SNIP proof.

Next, $h$ is defined as the polynomial representing the expected outputs of each multiplication gate, or the product $f\cdot g$. Since each of $f$ and $g$ will be of degree $2$, $h$ will be of degree $4$. The client can compute $h$ from $f$ and $g$ and must send shares of the description of $h$ to each server as part of the proof. 

Since the servers now have shares of the inputs and outputs of each multiplication from $f$, $g$, and $h$, they only need to check that $f\cdot g = h$ to be convinced that this relationship holds among their inputs. They do this by evaluating each polynomial at a random point $t$ and checking equality. To compute the product $f(t)\cdot g(t)$, the servers simply evaluate their shares of each function and publish the result. This reveals nothing about $f$ or $g$ except their value at the point $t$. 

The Prio paper~\cite{prio} and the improvements of Boneh et al.~\cite{BBC+19} give full proofs of completeness, soundness, and zero-knowledge for this protocol. As a minor optimization, instead of sending one proof as described above, we send two separate SNIPs, one for each of the two multiplications. This results in a slightly larger proof size but simplifies the polynomial multiplications because the polynomials $f$, $g$ become linear and $h$ becomes quadratic. The security properties of the protocol are unchanged by this modification. 

\noindent\textbf{Analysis.} Having described the relevant building blocks, we now sketch the analysis of our full auditing protocol. The security properties of our auditing scheme follow directly from those of the two protocols we combine to build it (which we do not re-prove here). Completeness follows directly from the completeness of the verifiable DPF protocol of Boyle et al. as well as the completeness of SNIPs. 

Likewise, soundness follows directly from the soundness of these two building blocks, with soundness error equal to the sum of the soundness error of the DPF verification protocol and the SNIP. We prove the following claim. 

\begin{claim}
If the servers begin the auditing protocol holding
vectors $w_A \in \F^n$ and $w_B \in \F^n$ such that
$w = w_A + w_B \in \F^n$ is a vector of Hamming-weight 
\emph{greater than} one, then the audit will reject, except with 
error probability $\epsilon = O(1/|\F|)$.
  %\approx 2^\lambda$. %$2/|\F|$, where the probability is taken over
%the randomness of the servers.
\end{claim}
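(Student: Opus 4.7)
My plan is to reduce soundness to a Schwartz--Zippel argument on the polynomial
$P(r_1,\dots,r_n) := c^2 - m\cdot C$, viewed as a formal polynomial in the shared random values $r_1,\dots,r_n$ drawn by the two servers. The SNIP soundness guarantees that, except with probability $O(1/|\F|)$, the servers only accept when the identity $c^2 = m\cdot C$ actually holds for the particular $c = c_A+c_B$, $C = C_A+C_B$, $m = m_A+m_B$ they have computed. So it suffices to show that, whenever the write-request vector $w=w_A+w_B$ has Hamming-weight at least two, $\Pr_{r_1,\dots,r_n}[P(r_1,\dots,r_n)=0] = O(1/|\F|)$.

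The first step is to write down $P$ explicitly. Since
\begin{align*}
c = \sum_{i\in S} w_i r_i, \quad C = \sum_{i\in S} w_i r_i^2, \quad m = \sum_{i\in S} w_i,
\end{align*}
where $S$ is the support of $w$, expanding gives
\begin{align*}
c^2 &= \sum_{i\in S} w_i^2 r_i^2 + 2\sum_{\{i,j\}\subset S,\,i<j} w_i w_j\, r_i r_j, \\
m\cdot C &= \sum_{i\in S} w_i^2 r_i^2 + \sum_{i\neq j,\, i,j\in S} w_i w_j\, r_j^2.
\end{align*}
Thus $P$ has total degree $2$ in the $r_i$'s, the diagonal $r_i^2$ contributions cancel up to mixed cross terms, and the coefficient of the mixed monomial $r_{i_1} r_{i_2}$ (for any distinct $i_1,i_2\in S$) is exactly $2 w_{i_1} w_{i_2}$. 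Since our field $\F_p$ has odd characteristic and $w_{i_1}, w_{i_2}\neq 0$ by definition of the support, this coefficient is nonzero, so $P$ is a nonzero polynomial of degree~$2$.

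The second step is to apply Schwartz--Zippel: a nonzero polynomial of total degree $2$ vanishes at a uniformly random point of $\F^n$ with probability at most $2/|\F|$. Combining this with the SNIP soundness error, the overall probability that an audit accepts a vector of Hamming-weight greater than one is at most $O(1/|\F|)$, as claimed.

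The only subtlety I anticipate is book-keeping in the coefficient computation: one must be careful to collect the $r_i r_j$ cross terms from $c^2$ and confirm that $m\cdot C$ contributes none of them (it only produces $r_j^2$ terms), so the cancellation happens on the diagonal rather than on the off-diagonal. Once this is verified, the non-vanishing of $P$ follows immediately, and the rest is a direct invocation of Schwartz--Zippel plus the union bound with the SNIP's soundness error.
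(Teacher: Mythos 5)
Your proposal is correct, and its overall skeleton matches the paper's: the audit accepts a bad vector only if either the polynomial identity $c^2 - m\cdot C = 0$ holds spuriously or the SNIP's soundness fails, and a union bound gives $O(1/|\F|)$. The difference is in how the first event is handled. The paper simply cites the verifiable-DPF analysis of Boyle et al.\ for the fact that the identity test rejects weight-$\geq 2$ vectors with high probability, whereas you prove it directly: expanding $P(r_1,\dots,r_n) = c^2 - m\cdot C$ and observing that the mixed monomial $r_{i_1}r_{i_2}$ (for two distinct support indices) has coefficient $2\,w_{i_1}w_{i_2} \neq 0$, so $P$ is a nonzero degree-$2$ polynomial and Schwartz--Zippel bounds the vanishing probability by $2/|\F|$. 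Your version is more self-contained and makes explicit two hypotheses the citation hides: the field must have odd characteristic (true here, $p = 2^{128}-159$) for the factor $2$ not to kill the cross terms, and the randomness $r_1,\dots,r_n$ must be chosen independently of $w$ — which holds in \name because the client commits to its DPF shares before the servers reveal the seed $r$. One cosmetic remark: the $r_i$ are pseudorandom (derived from a PRG seed shared by the servers), not uniform, so strictly speaking the bound picks up an additional negligible term from PRG security; the paper elides this as well, and it does not affect the stated $O(1/|\F|)$-style conclusion.
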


By taking $\F$ to be a field of size $2^\lambda$, for security parameter
$\lambda$, we can make the error probability $\epsilon$ negligibly small
in $\lambda$.

The claim is true because the auditing protocol will only accept a false proof if (1) the difference $c^2-mC=0$ for a $w$ that has more than one non-zero entry, or (2) the soundness of the SNIP fails to enforce that only inputs satisfying this relationship will be accepted.
But the probability of (1) is negligible in $|\F|$ by the security of the DPF verification protocol of Boyle et al.~\cite{dpfs2}, and the probability of (2) is negligible in $|\F|$ by the soundness of SNIPs~\cite{prio,BBC+19}.
By a union bound, the soundness error of the overall protocol is at most the sum of the soundness errors of the verifiable DPF protocol and the SNIPs. 

To prove the zero-knowledge property, we must show that there exists a simulator algorithm $\mathsf{Sim}$ that can produce outputs whose distribution is computationally indistinguishable from the view of the servers in an execution of the \name auditing protocol where the sum $w_A+w_B$ corresponds to a vector with a single non-zero entry. This algorithm will interact with a potentially malicious adversary $\mathcal{A}$ who plays the role of the server whose view is being simulated. This proves the security of the protocol because it shows that an adversary can learn anything it would learn from actually participating in the protocol by running $\mathsf{Sim}$ on its own. 

The construction of $\mathsf{Sim}$ and subsequent proof of security follow almost directly from the original proof of security for SNIPs used in Prio~\cite{prio}. To see why, observe that the view of each server in the auditing protocol consists of the server's DPF share, the server's share of the proof, and any messages sent between the servers during the proof. The only difference between this and the standard SNIP simulator is that the server's inputs are compressed in the form of DPF shares instead of being stated explicitly as the vector $w_A$ or $w_B$. In essence, the DPF can be thought of as an efficient way to encode the server's inputs to the proof. To bridge this difference between our protocol and the original SNIP, we make one small change to the SNIP simulator. The original SNIP simulator samples the server's input share at random. Our modified SNIP simulator will sample the server's input shares using simulated DPF shares instead. Since the proof of zero-knowledge is otherwise identical, we defer to the prio paper for the full proof~\cite{prio}. 

\end{document}